\author{Jacobus Conradi}{Department of Computer Science, University of Copenhagen, Denmark}{jaco@di.ku.dk}{}{}
\author{Ivor van der Hoog}{IT University of Copenhagen, Denmark}{ivva@itu.dk}{
https://orcid.org/0009-0006-2624-0231}{}
\author{Eva Rotenberg}{IT University of Copenhagen, Denmark}{erot@itu.dk}{
https://orcid.org/0000-0001-5853-7909}{}
\begin{document}

\newcommand{\fancy}[1]{\EuScript{#1}}
\newcommand{\dfs}{\Delta\text{-FSD}}

\title{{\fontsize{16pt}{22pt}\selectfont On computing the (exact) Fréchet distance with a frog}}
\titlerunning{On computing the (exact) Fréchet distance with a frog}

\newcommand{\frechet}{Fr\'{e}chet\xspace}
\newcommand{\Cov}{\texttt{\small cov}}
\newcommand{\fd}{\mathcal{D}_F}
\newcommand{\dd}{\mathbb{D}_F}
\newcommand{\ve}{\mathcal{V}_E}
\newcommand{\ive}{\mathcal{I}_{VE}}

\newtheorem{algodef}{Clustering}

\nolinenumbers
\keywords{Algorithms engineering, \frechet distance}

\bibliographystyle{plainurl}

\supplement{}
\supplementdetails[subcategory={Source code}, cite={}, swhid={}]{Software}{https://anonymous.4open.science/r/exactFrechetUnleashed-62D1/}

\EventEditors{XX}
\EventNoEds{2}
\EventLongTitle{XXX}
\EventShortTitle{XX}
\EventAcronym{XX}
\EventYear{XX}
\EventDate{XX}
\EventLocation{XX}
\EventLogo{socg-logo.pdf}
\SeriesVolume{X}
\ArticleNo{XX}

\authorrunning{J. Conradi, I. van der Hoog, and E. Rotenberg}
\Copyright{J. Conradi, I. van der Hoog, and E. Rotenberg}

\ccsdesc[100]{Theory of computation~Computational Geometry} 

\date{}

\maketitle

\begin{abstract}
The continuous Fréchet distance $\mathcal{D}_F(\pi,\sigma)$ between two polygonal curves $\pi$ and $\sigma$ is classically computed by exploring the \emph{free space diagram} over the two curves. 
Har-Peled, Raichel, and Robson [SoCG'25] recently proposed a radically different approach: they approximate $\mathcal{D}_F(\pi,\sigma)$ by computing paths in a discrete graph that models a joint traversal of~$\pi$ and~$\sigma$, recursively bisecting edges until the discrete distance converges to the continuous one. They implement their ``frog-based'' technique, and claim that it yields substantial practical speedups compared to the state-of-the-art implementations.

In this paper, we revisit this technique as we address three aspects of the paper: (i) it does not compute the exact Fréchet distance, (ii) its recursive bisection introduces the required monotonicity events to realise the Fréchet distance only in the limit, and (iii) it applies a heuristic simplification technique which is conservative. 
Motivated by theoretical interest, we develop new techniques that guarantee exactness, polynomial-time convergence and near-optimal lossless simplifications. We provide an open-source C++ implementation of our variant.

Our primary contribution is an extensive empirical evaluation on a broad, publically available, suite of real-world and synthetic data sets. 
Among the frog-based variants, our exact computation introduces a level of overhead. This is to be expected, and we see an increase in the median runtime. Yet, our new approach is often faster in the worst case, worst ten percent, or even the average runtime due to its worst-case convergence guarantees.  More surprisingly, our analysis shows that the implementation of Bringmann, Künnemann, and Nusser [SoCG'19] 
dominates all frog-based implementations in performance.
These results provide a much-needed nuanced perspective on the capabilities and limitations of frog-based techniques: we showcase its theoretical appeal, but highlight its current practical limitations. 
\end{abstract}
\clearpage 

\section{Introduction}

The continuous Fréchet distance between two polygonal curves $\pi$ and $\sigma$ is classically described using the analogy of a person walking along $\pi$ and a dog walking along $\sigma$, each moving continuously and monotonically. The continuous Fréchet distance $\fd(\pi, \sigma)$ is the infimum over all such traversals of the maximum distance between them. A discrete analogue is defined similarly by imagining two frogs that hop monotonically along the vertices of $\pi$ and $\sigma$, leading to the discrete Fréchet distance $\dd(\pi, \sigma)$.

Let $\pi = (p_1,\ldots,p_n)$ and $\sigma = (s_1,\ldots,s_m)$. From an algorithmic and practical perspective, it is possible to decide whether their continuous or discrete distance is less than some input value $\Delta$ in an efficient manner. For a given value $\Delta$, deciding whether $\dd(\pi, \sigma) \leq \Delta$ reduces to determining whether there exists a monotone path in an $n \times m$ vertex-weighted grid where the weight of $(p_i, s_j)$ is $d(p_i, s_j)$. Removing all vertices with weight greater than $\Delta$ and performing a graph search from $(p_1, s_1)$ to $(p_n, s_m)$ takes $O(n m)$ time. The continuous decision problem admits a related construction that we discuss in the preliminaries.

\subparagraph{Computing the Fréchet distance and monotonicity events.}
Computing the \emph{value} of the Fréchet distance is considerably more intricate. The continuous distance is realised by one of three types of events (see Figure~\ref{fig:events}). Vertex events correspond to distances between pairs of vertices in $\pi \times \sigma$; edge events correspond to distances between a vertex of one curve and an edge of the other; and monotonicity events correspond to triples $(e, s, s')$, where $e$ is an edge of one curve, $s, s'$ are vertices of the other curve, and the corresponding distance value is realised by  the point on $e$ minimising the maximum distance to $s$ and $s'$. 
The discrete distance $\dd(\pi, \sigma)$ is always realised by a vertex event. As there are $O(nm)$ such events one may therefore compute all distances, sort them, and apply the decision procedure, yielding an $O(n m \log nm)$ algorithm (or $O(n m)$ with a simple dynamic program).
This version is also easy to make exact, as one can simply square the vertex-weights to preserve rationality of the input. 
For the continuous setting, monotonicity events introduce two significant difficulties:
\begin{itemize}
    \item There are $O(n^3 + m^3)$ such events, so no quadratic-time algorithm can enumerate them.
    \item Edge and monotonicity events involve equations that contain several square root monomials. They therefore offer no straightforward way to avoid numerical imprecision, even if the input is integral. This causes problems, not only for computing the distance value, but even for computing the correct combinatorial traversal that realises the Fréchet distance.
\end{itemize}
\noindent
An exact decision procedure for the continuous variant was provided by Bringmann, Künnemann, and Nusser~\cite[SoCG'19]{bringmann2019Walking}, who use exact geometric predicates to test $\fd(\pi, \sigma)\leq \Delta$ in $O(nm )$ time. Their implementation allows an approximation of $\fd(\pi,\sigma)$ to $b$ bits precision via binary search in $O(nm(\log nm + \log b))$ time, but it does not compute the exact value.

\begin{figure}[h]
    \centering
    \includegraphics[width=\linewidth]{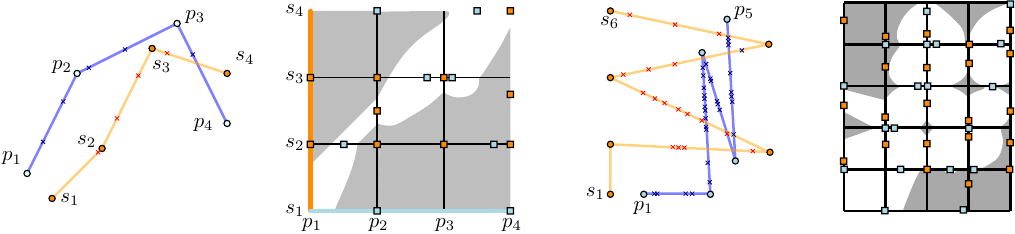}
    \caption{Two curves $(\pi, \sigma)$ with their parameter space. White space indicates that the corresponding points are within some distance $\Delta$.  In each grid, we illustrate the edge events. For each monotonicity event defined by an edge $e$ and vertices $(x_1, x_2)$ we show the event on $e$ with a cross. }
    \label{fig:events}
\end{figure}

\begin{figure}[t]
    \centering
    \includegraphics[width=\linewidth]{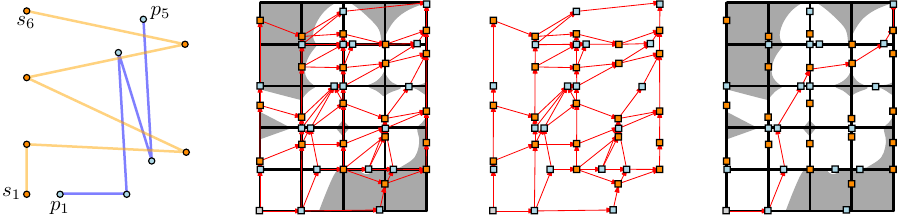}
    \caption{Two pairs of curves $(\pi, \sigma)$ and their \texttt{VE}-graph which is a directed and vertex-weighted graph. The minimum-cost path from $(1, 1)$ to $(n, m)$ in this graph is not a monotone traversal.  }
    \label{fig:ve_frechet}
\end{figure}
 
\subparagraph{Approaching the continuous Fréchet distance with a frog.}
Har-Peled, Raichel, and Robson~\cite[SoCG'25]{harpeled_et_al:LIPIcs.SoCG.2025.54} recently propose a different approach. Given $(\pi, \sigma)$ they construct what they call the \texttt{VE}-graph whose vertices correspond to all vertex and edge events.
In this graph, there is an edge between two vertices if they share a cell and either the edge is monotone, or the events lie on an edge of the same curve (see Figure~\ref{fig:ve_frechet}). 
The vertices are weighted by the event distance, and the shortest path from $(1, 1)$ to $(n, m)$ in this graph corresponds to a (non-monotone) person-dog traversal $\gamma$. If $\gamma$ happens to be $xy$-monotone, its continuous counterpart realises the Fréchet distance. Otherwise, non-monotonicities certify that the Fréchet distance is realised by a monotonicity event. Instead of computing monotonicity events, their algorithm bisects each edge of $\pi$ and $\sigma$ where $\gamma$ is non-monotone in the corresponding row or column.
This creates higher-complexity curves $(\pi', \sigma')$ on which they recurse. 
They call this \emph{refinement}, and its recursive application yields increasingly complex curves whose vertex and edge events approach the continuous Fréchet distance $\fd(\pi, \sigma)$. In the limit, this procedure converges to introducing the necessary monotonicity events and thereby  computes the continuous Fréchet distance through discrete shortest paths.

Their second contribution is a clever simplification technique that is agnostic to how the Fréchet distance between curves is computed. 
They show how to simplify two input curves $(\pi, \sigma)$ into lower-complexity curves $(\overline{\pi}, \overline{\sigma})$ such that $\fd(\pi, \sigma) = \fd(\overline{\pi}, \overline{\sigma})$. 
The procedure is iterative: they first compute simplified curves $(\overline{\pi}, \overline{\sigma})$ and evaluate $\fd(\overline{\pi}, \overline{\sigma})$ using any method. This yields a person-dog traversal, from which they  generate a witness certifying whether $\fd(\overline{\pi}, \overline{\sigma}) = \fd(\pi, \sigma)$ holds. 
If not, the curves are partially un-simplified, reintroducing complexity until the criterion is satisfied.
Based on our observations, this simplification step is likely the main source of the speedy performance of their implementation.

\subparagraph{Applications of Fréchet distance.}
The Fréchet distance has many well-known applications, in particular in the analysis and visualisation of movement data~\cite{buchin2017clustering, buchin2020group, konzack2017visual, xie2017distributed}. 
It is a versatile measure used across a wide range of domains, including handwriting~\cite{sriraghavendra2007frechet}, coastlines~\cite{mascret2006coastline}, geographic outlines~\cite{devogele2002new}, trajectories of vehicles, animals, and sports players~\cite{acmsurvey20, su2020survey, brakatsoulas2005map, buchin2020group}, air traffic~\cite{bombelli2017strategic}, and protein structures~\cite{jiang2008protein}. 
These typically provide their own Fréchet distance implementations.
A recent and particularly active domain is \emph{trajectory clustering}, where the task is to group (sub)curves into clusters such that each cluster has low pairwise Fréchet distance. 
Several approaches adopt this principle, including the subtrajectory clustering algorithm of Agarwal et al.~\cite{agarwal2018subtrajectory}, the map construction and clustering algorithms of Buchin et al.~\cite{buchin2017clustering, buchin2020improved}, the clustering method of Buchin et al.~\cite{buchin2011detecting} (implemented in the \texttt{MOVETK} library~\cite{MOVETK}), the clustering algorithm of Conradi and Driemel~\cite{conradi2023finding}, and the clustering of van der Hoog et al.~\cite{vanderhoo2025Efficient}. 
In the \emph{vast} majority of these examples (clustering and otherwise), the implementation computes the discrete Fréchet distance, as this is considerably easier to evaluate in practice.

Two of these approaches~\cite{buchin2017clustering, buchin2020improved} compute the \emph{semi-weak} Fréchet distance, in which the person and the dog are not required to be monotone within an edge. This variant corresponds to the edge-monotone minimum-cost traversal computed by Har-Peled, Raichel, and Robson~\cite{harpeled_et_al:LIPIcs.SoCG.2025.54} before their curve refinement procedure is applied. 
These examples illustrate the need for efficient and adaptable Fréchet distance implementations, and highlight why the refinement-based approach of Har-Peled, Raichel, and Robson is appealing: it suggests the possibility of computing the exact continuous Fréchet distance through a discrete traversal mechanism that is compatible with existing algorithmic paradigms and interfaces.

\subparagraph{Contribution.}
We re-examine the frog-based framework of Har-Peled, Raichel, and Robson~\cite{harpeled_et_al:LIPIcs.SoCG.2025.54}, as we revisit three aspects of their contribution. 

\subparagraph{Contribution 1: exact-value computations.} We find the over-arching method conceptually very interesting (and even elegant), however, their algorithm does not compute the exact Fréchet distance. 
They state that ``as the calculation of the optimal Fréchet distance involves distances, impreciseness is unavoidable''. In practice, there exist exact-evaluation predicates. In theory, one can even assume a real RAM. Their current approach would remain imprecise under such assumptions:
\begin{enumerate}
    \item Their implementation inherently uses floating-point arithmetic and thus impreciseness.
    \item However, even with exact real-valued arithmetic, their edge bisection does not introduce the monotonicity events that realise the exact Fréchet distance. This refinement converges to the correct value only in the limit, with no bound on the number of required iterations.
   \item Their lossless simplification step is theoretically sound and would yield the exact Fr\'echet distance under real-valued arithmetic. To ensure that the Fr\'echet distance remains unchanged after simplification, they impose constraints on how the curves may be simplified which can be relaxed. 
\end{enumerate}
The original paper briefly acknowledges the possibility of an exact algorithm.
Our motivation is to understand to what extent this framework \emph{can} be made exact, and to provide the required components to do so.
\begin{enumerate}
    \item We explain in detail how to unpack square-root expressions for exact evaluation. This is non-trivial, particularly for monotonicity events and curve simplifications where several square-root terms must be handled simultaneously.
    \item We replace their heuristic refinement procedure with a new output-sensitive algorithm that generates the required monotonicity events on demand. 
    As a consequence, we obtain provable polynomial-time convergence, even under real-valued computations.
    \item We strengthen the lossless simplification technique. Whilst the method from~\cite{harpeled_et_al:LIPIcs.SoCG.2025.54} works as a black-box given any Fréchet computation algorithm that outputs a traversal, we note that this frog-based paradigm allows for a tighter algorithm and analysis. This leads to an alternative and new algorithm for lossless simplification under the frog-based method.  
\end{enumerate}

\subparagraph{Contribution 2: exact implementation.}
We provide an open-source C++ implementation of our adapted algorithm with our exact arithmetic and monotonicity event-generation. This  is thereby the first exact implementation. We experientially compare our solution to the one provided in~\cite{harpeled_et_al:LIPIcs.SoCG.2025.54}. Unsurprisingly, our median running times increase due to exact computation. Our exact-value representations simply do now allow for vectorization, which comes at performance cost. Somewhat unexpectedly, the worst-case running time drastically improves.
We can show instances where the recursive refinement encounters its worst-case behaviour and terminates with an approximation only after many refinement steps. 
For some data sets, the worst-case behaviour is such that even our average time is an improvement. 

\subparagraph{Contribution 3: extensive empirical analysis.}
Our final contribution is the most significant. 
\cite{harpeled_et_al:LIPIcs.SoCG.2025.54} claims practical performance gains over prior Fréchet distance computation implementations. 
We observe that the empirical evaluation in~\cite{harpeled_et_al:LIPIcs.SoCG.2025.54} is conducted on animal-tracking data, which exhibit two properties: animal trajectories are goal-oriented (which leads to little to no backtracking), and they frequently contain long runs of near-duplicate vertices (as animals stand still for prolonged periods of time). These characteristics favour the lossless simplification step in~\cite{harpeled_et_al:LIPIcs.SoCG.2025.54}, as they can effectively reduce the input size before the main computation. We perform a broader evaluation by considering additional data sets: un-edited real-world data set from traffic data and adversarial inputs derived from orthogonal vectors.
Across all our data sets, we observe a substantially different performance profile than in~\cite{harpeled_et_al:LIPIcs.SoCG.2025.54}. In particular, the highly-optimised implementation of Bringmann, Künnemann, and Nusser \texttt{BKN}~\cite{bringmann2019Walking} dominates all frog-based implementations in performance (e.g., Figure~\ref{fig:results}).

At first sight, our empirical analysis may appear unfavourable to our contribution, as it shows that on many practical instances the performance of this new frog-based approach is surpassed by classical techniques.
However, we argue for our significance:  We are the first exact Fréchet distance implementation. Moreover, the empirical analysis reported at SoCG'25~\cite{harpeled_et_al:LIPIcs.SoCG.2025.54} claimed strong practical performance over~\cite{bringmann2019Walking}.
Our broader evaluation provides a complementary picture: it clarifies that this technique is valuable, but also clearly shows the limits of its use. 
Ultimately, we argue that the frog-based technique of~\cite{harpeled_et_al:LIPIcs.SoCG.2025.54} is of genuine theoretical interest, as it can be augmented to compute the exact Fréchet distance in a unique manner with provable guarantees. 
At the same time, its practical feasibility is debatable.

\begin{figure}[h]
    \centering
    \includegraphics[width=\textwidth]{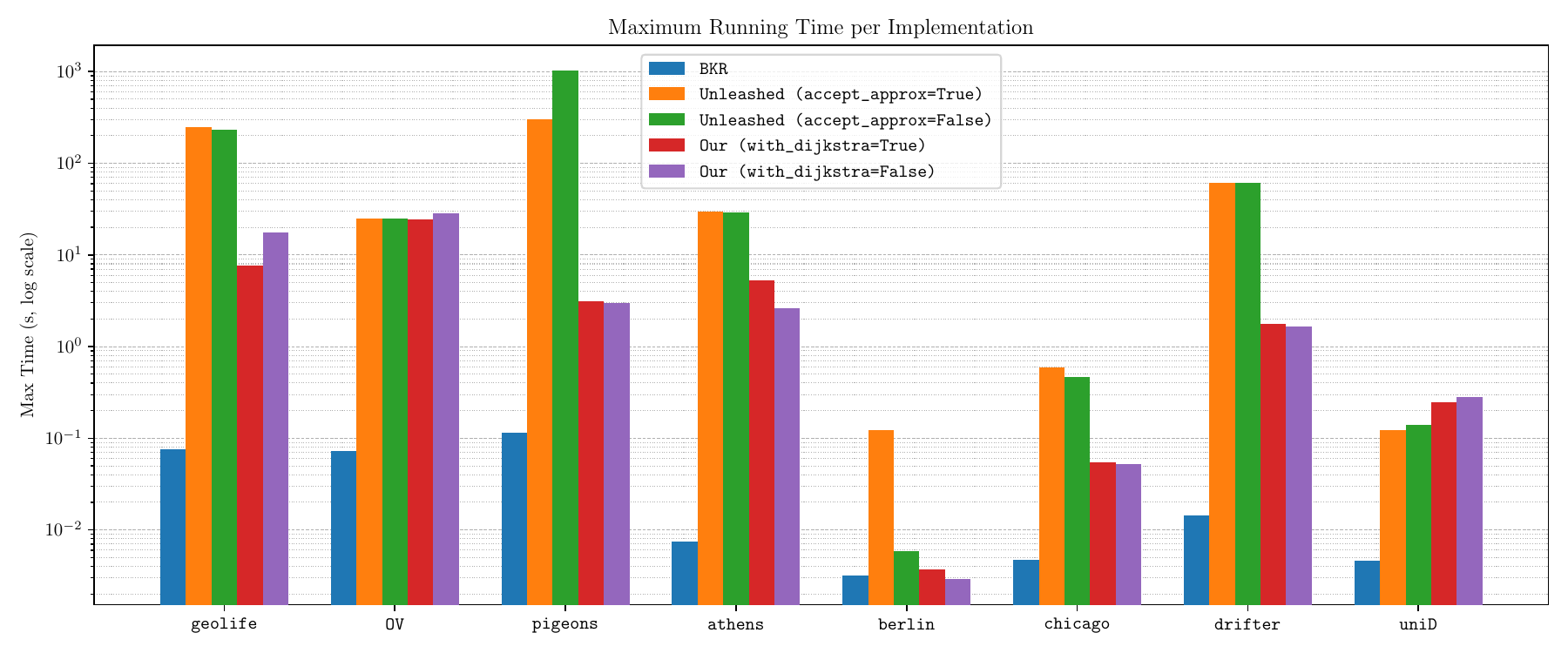}
    \caption{Maximum running times in seconds, log-scale, for curves from eight data sets.}
    \label{fig:results}
\end{figure}

\newpage
\section{Preliminaries}

The input consists of polygonal curves $\pi$ and $\sigma$.
A \emph{curve} $\pi = (p_1, \ldots, p_n)$ is an ordered sequence of \emph{vertices} connected by straight-line \emph{edges}.  
We implicitly endow every $n$-vertex curve $\pi$ with a bijective, monotone parametrisation $\pi : [1,n] \to \mathbb{R}^d$ such that $\pi(i) = p_i$ for all integers $i$.  
For $t \in [i,i+1]$, the mapping is defined by linear interpolation $
  \pi(t) := p_i + (p_{i+1}-p_i)\, (t-i)$.
  
  For an ordered pair of curves $(\pi,\sigma)$ of sizes $(n,m)$, a \emph{traversal} is any pair of continuous, monotone mappings
  $\alpha : [0,1] \to [1,n]$ and $\beta : [0,1] \to [1,m]$.  
  The (continuous) Fréchet distance $\fd(\pi,\sigma)$ is defined as
  $
    \fd(\pi,\sigma)
      := \min\limits_{\text{traversals } (\alpha,\beta)}\;
           \max\limits_{t \in [0,1]} d\bigl( \pi(\alpha(t)),\, \sigma(\beta(t)) \bigr)\, .
  $

Equivalently, consider the parameter space $\Gamma_{\pi,\sigma} := [1,n] \times [1,m]$, subdivided into an $n \times m$ grid.  
A traversal corresponds to any continuous, $xy$-monotone curve $\gamma(t) = (\alpha(t),\beta(t))$ from $(1,1)$ to $(n,m)$ in $\Gamma_{\pi,\sigma}$.  
We therefore freely consider traversals as monotone curves.

\subparagraph{Deciding.}
We can decide whether $\fd(\pi,\sigma) \le \Delta$ as follows.  
For each point $(a,b) \in [1,n] \times [1,m]$, define $\phi(a,b)=0$ if and only if $d\bigl(\pi(a),\sigma(b)\bigr) \le \Delta$.  
The area $\{(a,b) \mid \phi(a,b)=0\}$ is the \emph{free space}.   
A straightforward DFS over the free space decides in $O(nm)$ time whether the free space contains a monotone curve from $(1,1)$ to $(n,m)$, and hence whether $\fd(\pi,\sigma) \le \Delta$.

\subparagraph{Computing.}
The Fréchet distance is realised by one of three types of \emph{events}:
\begin{itemize}
  \item \emph{Vertex events}: distances between pairs of vertices in $\pi \times \sigma$.
  \item \emph{Edge events}: distances between a vertex of one curve and an edge of the other.
  \item \emph{Monotonicity events}: triples $(e,s,s')$ where $e$ is an edge of one curve and $s,s'$ are vertices of the other, corresponding to minimising the maximum distance to $s$ and $s'$.
\end{itemize}
A naive algorithm enumerates all events in $O(n^3 + m^3)$ time, sorts them, and performs a binary search with the decision algorithm to obtain $\fd(\pi, \sigma)$.  
Alt and Godau~\cite{alt1995computing} observed that many monotonicity events are irrelevant and achieved an $O(nm \log nm)$-time exact algorithm.  
Har-Peled and Raichel~\cite{HarPeled2011Frechet} gave a considerably simpler approach:  
as $\Delta$ increases, connected components of the free space merge, and a \emph{critical value} is the smallest $\Delta$ at which a new merge occurs.  
Each cell has one connected component, implying $O(nm)$ critical values.  
By comparing adjacent cells, all critical values can be generated in $O(nm \log nm)$ time.  
The fastest algorithm is by Buchin et al.~\cite{Buchin2014Four}, running in  
$   O\!\left(n^{2}\, \sqrt{\log n}\, (\log\!\log n)^{3/2} \right) $ time.

\subparagraph{Retractability.}
For curves $(\pi,\sigma)$ and a traversal $\gamma = \{\gamma(t) \mid t \in [0,1]\}$, define $\mathrm{cost}(\gamma) := \max_{(a,b) \in \gamma} d\bigl(\pi(a),\sigma(b)\bigr).$
Let $(a,b) \in \gamma$ be a point that attains this maximum.  
Then $\gamma$ decomposes into two open curves:  
$\gamma_1$ from $(1,1)$ to $(a,b)$ and $\gamma_2$ from $(a,b)$ to $(n,m)$.  
Har-Peled, Raichel and Robson~\cite{harpeled_et_al:LIPIcs.SoCG.2025.54}  call $\gamma$ \emph{retractable} if  
$\gamma_1$ and $\gamma_2$ are minimum-cost monotone curves for their respective endpoints,  
and if $\gamma_1$ and $\gamma_2$ are themselves retractable (Figure~\ref{fig:retractable}).
Computing the Fréchet distance using an adaption of Dijkstra yields a retractable traversal. 

\begin{figure}[h]
  \centering
  \includegraphics[width = \linewidth]{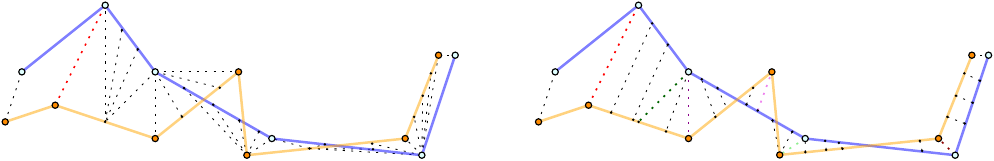}
  \caption{(Left) A traversal between two curves where the red leash realises $\fd(\pi,\sigma)$.  
  The traversal may be `lazy': a large Fréchet distance allows long pauses while the other curve progresses.  
  (Right) A retractable traversal achieves progressively smaller bottleneck leash lengths.}
  \label{fig:retractable}
\end{figure}
\newpage

\section{The frog-based approach}
We paraphrase and explain the frog-based approach from~\cite{harpeled_et_al:LIPIcs.SoCG.2025.54}, defining their \texttt{VE}-graph slightly differently to reduce the number of edge-cases the definition incurs. We then detail how to make it converge in a polynomial number of rounds, even on a real-RAM, by creating an algorithm that generates the required monotonicity events on the fly.

\begin{definition}
   Each edge $e$ of the grid  $\Gamma_{\pi, \sigma}$ has an edge event defining a point that we call an \emph{edge-distance yardstick} (or \emph{eddy}).
$V(\pi, \sigma)$ denotes the set of all grid vertices and eddys. 
\end{definition}

\begin{figure}[b]
    \centering
    \includegraphics[width = \linewidth]{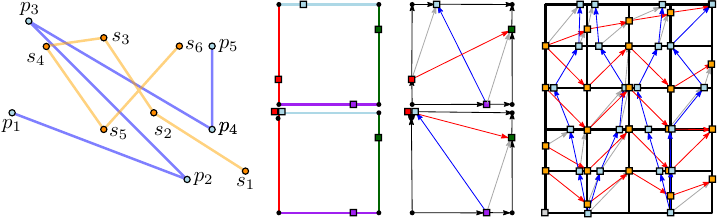}
    \caption{ 
    The \texttt{VE} graph considers for each cell the four corners and the four eddys. 
    We include all monotone edges between these vertices. 
    Edges connecting the eddys between horizontal (or vertical) edges may be non-monotone. If vertices coincide, then the picture gets a bit messy. }
    \label{fig:ve_graph}
\end{figure}

\begin{definition}
  The \texttt{VE}-graph is a directed vertex-weighted graph on $V(\pi,\sigma)$.  
  For each cell of $\Gamma_{\pi,\sigma}$, consider its four corners and eddys. We connect all these points in an $xy$-monotone manner.  
  In addition, the bottom horizontal eddy has a directed edge to the top horizontal eddy, and the left vertical eddy has a directed edge to the right vertical eddy; see Figure~\ref{fig:ve_graph}.
\end{definition}

\begin{definition}
  A path in the \texttt{VE}-graph is \texttt{VE}-respecting if it runs from $(1,1)$ to $(n,m)$ and, within every row and every column, its restriction to the row or column is connected.  
  \[
    \ve(\pi,\sigma)
      := \min_{\texttt{VE}\text{-respecting path } P}
           \; \max_{\text{vertex } v \in P}
           \mathrm{weight}(v).
    \qquad\qquad
    \text{Observe that } \ve(\pi,\sigma) \le \fd(\pi,\sigma).
  \]
\end{definition}

\noindent
From a path realizing the \texttt{VE} distance, one may also derive an upper bound on $\fd(\pi, \sigma)$.

\begin{definition}
    Let $P$ be a path realizing $\ve(\pi,\sigma)$ then $P$ is a curve. One can greedily morph $P$ into a traversal by ensuring that one never decreases in either coordinate. This defines an $xy$-monotone traversal $\gamma(t) = (\alpha(t), \beta(t))$.
    The \emph{Interpolated \texttt{VE} distance} is then defined as:
    \[
        \ive(\pi, \sigma)
        := \max_{t \in [0,1]} d\bigl(\pi(\alpha(t)), \sigma(\beta(t))\bigr).
        \qquad\qquad
        \text{Observe that } \fd(\pi,\sigma) \le \ive(\pi,\sigma).
    \]
\end{definition}

The \emph{elevation function} maps $(a,b) \in \Gamma_{\pi, \sigma}$ to $d(\pi(a), 
\sigma(b))$.
The sublevel sets of this function are clipped ellipses, whose minima on each cell edge occur precisely at the corresponding eddys.
It follows that $\ve(\pi,\sigma) = \fd(\pi, \sigma) = \ive(\pi, \sigma) $ whenever the path $P$ realizing $\ve(\pi,\sigma)$ is monotone.
This is the core of the frog-based refinement approach (Algorithm~\ref{alg:refine}).

\begin{algorithm}[H]
\caption{Refinement Procedure from~\cite{harpeled_et_al:LIPIcs.SoCG.2025.54}}
\label{alg:refine}
\begin{algorithmic}[1]
  \State Compute the \texttt{VE}-graph, the value $\ve(\pi,\sigma)$, and the path $P$ realising it
  \State \textbf{if} $P$ is monotone \textbf{then} $\ve(\pi,\sigma) = \fd(\pi, \sigma)$ so \Return $\ve(\pi,\sigma)$
  \State Otherwise, insert vertices on $\pi$ for every non-x-monotone column, and vertices on $\sigma$ for every non-$y$-monotone row. Then recurse on the refined curves $(\pi,\sigma)$.
\end{algorithmic}
\end{algorithm}

\subsection{The approach in~\cite{harpeled_et_al:LIPIcs.SoCG.2025.54}: bisecting and convergence.}
Har-Peled, Raichel, and Robson~\cite{harpeled_et_al:LIPIcs.SoCG.2025.54} execute step 3 of Algorithm~\ref{alg:refine} in a heuristic manner.  
Let $P$ be a path realising $\ve(\pi, \sigma)$ for the current input $(\pi, \sigma)$. 
If $P$ is non-monotone in a column that corresponds to an edge $e$ of $\pi$ then they introduce a vertex half-way $e$ (likewise, they bisect edges of $\sigma$ for each row where $P$ is non-monotone). 
Thus, given $P$ they bisect some edges and create higher-complexity curves $(\pi, \sigma)$ on which they recurse.

Recall that each monotonicity event $(e,p,p')$ corresponds to a point on the edge $e$; we refer to such a point as a \emph{monotonicity vertex}.  
Let $(\pi',\sigma')$ be the curves obtained from $(\pi,\sigma)$ by inserting, on $\pi$ and $\sigma$ respectively, a vertex at every monotonicity vertex.
It is observed in~\cite{harpeled_et_al:LIPIcs.SoCG.2025.54} that the path $P$ realising $\ve(\pi',\sigma')$ is $xy$-monotone, and therefore $
  \ve(\pi',\sigma') = \fd(\pi,\sigma).$
Under exact-value arithmetic, the recursive halving procedure of~\cite{harpeled_et_al:LIPIcs.SoCG.2025.54} inserts in the limit all monotonicity vertices of $\pi$ and $\sigma$, and thus converges to the exact Fréchet distance $\fd(\pi,\sigma)$.

\subsection{Our approach and rate of convergence.}

In~\cite{harpeled_et_al:LIPIcs.SoCG.2025.54} they use floating-point arithmetic. 
Therefore,  edges cannot be bisected in an exact manner and their bisecting never generates the exact monotonicity vertices. 
They remark that under exact computations one can compute and introduce monotonicity vertices instead, but do not provide details, and naively there are cubicly many such events. 
Floating-point arithmetic is not unavoidable. Many libraries support exact arithmetic (e.g., CGAL). We engineer a solution that alters step 3 to provide guarantees for exactness and convergence. 

Let the path $P$ realising $\ve(\pi, \sigma)$ be not $x$-monotone in a column $C$ corresponding to an edge $e$. 
Let the subpath of $P$ restricted to $C$ be from $a$ to $b$. We execute step 3 by computing the minimum number of monotonicity vertices that need to be introduced on $e$ such that, next iteration, the min-cost path from  $a$ to $b$ is monotone.
Our algorithm is inspired by the critical values from~\cite{alt1995computing, HarPeled2011Frechet} from the preliminaries, but ultimately has a different objective.

\subparagraph{On efficiency.}
Exact-value arithmetic inevitably incurs overhead.  
Evaluating exact geometric predicates is substantially slower than using floating-point arithmetic. Moreover, the resulting values must be maintained as abstract algebraic objects rather than bit-level representations, preventing efficient vectorisation. This is partly why the algorithm from~\cite{bringmann2019Walking} does not use exact arithmetic even though it is integrated in CGAL. 
We therefore expect exact computation to increase median running times when we introduce exact arithmetic. 
To mitigate this effect, we restrict the input to $32$-bit integers, as this allows the most frequent operation---the computation of dot products---to still be efficient. For our exact arithmetic, based on the computed dot products, we discuss in Appendix~\ref{sec:exact} our own integer-based kernel, to remain practically somewhat competitive. 
Importantly, our algorithm does not rely on integrality assumptions: it applies unchanged under any exact-evaluation kernel.

\subparagraph{High-level overview.}
In appendix~\ref{app:monotonicity} we develop the following algorithm:
the input is a column $C$ where $P$ is not $x$-monotone, corresponding to an edge $e$ of $\pi$. Let $P$, restricted to $C$, go from $a$ to $b$. We create two new cells by extending a horizontal segment from $a$ and $b$. We discard all cells of $C$ that do not intersect $P$. We index the remaining cells as $C_1, \cdot, C_k$ from bottom to top, which induce horizontal segments $h_1, \ldots, h_{k+1}$. Imagine $\Delta$ increasing from $0$ to $\infty$.
For each value of $\Delta$ and each pair $(i, j)$ with $i < j$, $h_i$ can \emph{$\Delta$-reach} $h_j$ if there exists a monotone path from a point on $h_i$ to a point on $h_j$ in the $\Delta$ free space (see Figure~\ref{fig:shootup}).

A value $\Delta$ is an \emph{join event} if there exist $(i,j)$ such that $h_i$ can $\Delta$-reach $h_j$ from $\Delta$ onward.  
In this case, $h_i$ can also $\Delta$-reach all $h_{j'}$ with $j' \in [i,j)$.
We compute the minimum number of join events to realise a monotone path from $h_1$ to $h_{k}$ using a linked–list of buckets.

The horizontal segments are partitioned into buckets, each forming a vertical interval $[i,j]$ such that $h_i$ can $\Delta$-reach $h_j$.
We initialise $\Delta$ as $0$, and every horizontal segment forms its own bucket.
We then merge buckets in a bottom-up fashion, maintaining the minimum $\Delta$ required such that $h_1$ can $\Delta$-reach $h_i$ for increasing $i$. 
Between $(h_1, \ldots, h_i)$ and $h_{i+1}$, we compute the smallest value $\Delta' \geq \Delta'$ for $h_1$ can $\Delta$-reach $h_{i+1}$ using a min-heap that contains join events and other events that indicate for which pair $(j, i+1)$ the join event is relevant.   

Whenever we pop the next event, we update $\Delta$.
If the updated $\Delta$ allows us to $\Delta$-reach $h_{i+1}$ from $h_1$ then we merge $h_{i+1}$ into the bucket.
We then keep incrementing $i$ until $h_1$ cannot $\Delta$-reach $h_{i+1}$, at which point we continue the event-based approach. 
Whenever we merge $h_k$ into the bottom bucket, $h_1$ can $\Delta$-reach $h_k$. 
When all buckets are joined, we have all join events required for a monotone path from $h_1$ to reach $h_k$. From this,  we obtain the minimum number of monotonicity vertices needed so that, in the next iteration, the min-cost path from $a$ to $b$ is monotone. Doing this for all non-monotone rows and columns implies:

\begin{restatable}{theorem}{shootup}
Let $(\pi,\sigma)$ be the original input curves with $n$ and $m$ vertices. Algorithm~\ref{alg:refine} can be supported by a data structure using $O(n+m+M)$ space, where $M \in O(n^{3} + m^{3})$ is the number of introduced monotonicity vertices. Each recursive call runs in $O(nm + M)$ time for Step~1 and $O( (n+m+M) \log^2 (n+m+M))$ time for Step~3. Under exact-value arithmetic, the procedure returns the exact Fréchet distance after $O(n^{3} + m^{3})$ recursive calls.
\end{restatable}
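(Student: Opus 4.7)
The proof decomposes into three components: correctness of the exact value, the $O(n^3+m^3)$ bound on the number of recursive calls, and the per-call cost bounds for Steps~1 and~3 together with the overall space bound.

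For correctness and the call count, I plan to lean on the observation recalled in Section~3.1: if every monotonicity vertex of the original input is inserted on the corresponding edge of $\pi$ or $\sigma$, then the min-cost \texttt{VE}-respecting path becomes $xy$-monotone, and therefore $\ve = \fd$. I would then argue by induction on the recursion depth that the invariant ``every inserted vertex is a genuine monotonicity vertex of $(\pi,\sigma)$'' is maintained. At each call, either $P$ is already $xy$-monotone---in which case the observation above certifies that the returned value equals $\fd(\pi,\sigma)$---or Step~3 identifies a non-monotone row or column and, by design of the bucket-merging replacement, inserts at least one new monotonicity vertex (never a mere bisection). Since the universe of monotonicity vertices has size $O(n^3+m^3)$, one per triple of an edge of one curve and two vertices of the other, and since vertices are never removed once inserted, we have $M \in O(n^3+m^3)$, and the number of recursive calls is at most $M$.

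For the per-call cost, Step~1 is a standard shortest-path/DFS computation on the current \texttt{VE}-graph, whose size I plan to bound by $O(nm+M)$; this yields the stated $O(nm+M)$ running time. For Step~3 I invoke the bucket-merging procedure sketched in the high-level overview (and detailed in Appendix~\ref{app:monotonicity}) once per non-monotone row and column. The total combinatorial complexity of processed rows and columns within a single call is $O(n+m+M)$. Each individual bucket-merge step performs a heap update costing $O(\log(n+m+M))$, and by a union-by-size / small-to-large merging discipline each horizontal (or vertical) segment $h_i$ participates in only $O(\log(n+m+M))$ merges. Multiplying these two logarithmic factors with the total segment count produces the claimed $O((n+m+M)\log^2(n+m+M))$ bound. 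The space bound then follows directly, since the refined curves contribute $O(n+m+M)$ vertices and the buckets, min-heap of join events, and per-cell auxiliary data each occupy $O(n+m+M)$ space.

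The hard part will be certifying correctness and minimality of the bucket-merging procedure: specifically, that the join events enumerated for a non-monotone column correspond exactly to monotonicity vertices on the offending edge $e$, and that inserting the subset identified by the procedure is both sufficient and minimal for making the restriction of the next iteration's min-cost path from $a$ to $b$ monotone. A secondary subtlety is carrying through the $\log^2$ amortisation across a stream of interleaved heap pops and bucket merges, where I expect the small-to-large merging discipline to carry most of the weight. Once the correctness of Step~3 and its logarithmic amortisation are in place, the three ingredients combine into the stated theorem.
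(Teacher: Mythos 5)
Your decomposition into correctness, iteration-count, and per-call cost matches the paper's outline, but the charging scheme you propose for the iteration count is genuinely different from the paper's and contains a gap. You want to maintain the invariant that every inserted vertex is a monotonicity vertex of the \emph{original} pair $(\pi,\sigma)$, so that the universe of possible insertions has size $O(n^3+m^3)$. But after the first round of refinement both $\pi$ and $\sigma$ already carry inserted vertices, so when the subroutine processes a non-monotone column of the (refined) $\pi$, the cell boundaries — and hence the horizontal segments $h_1,\ldots,h_{k+1}$ over which spawn, undertake, and join events are defined — sit at $y$-coordinates of the current, refined $\sigma$. The join event that triggers an insertion can therefore correspond to a triple $(e,s,s')$ in which $s$ or $s'$ is itself a previously inserted vertex, which is not a monotonicity event of the original curves, and your inductive step collapses. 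The paper avoids this issue entirely: it charges each non-terminating recursive call to a pair of cells of the \emph{original} $n\times m$ grid that share a row or column (there are $O(nm^2+mn^2)\subseteq O(n^3+m^3)$ such pairs), and argues that once the subroutine is invoked between the two cells of a pair, no later min-cost path can traverse that pair of cells non-monotonically again. This sidesteps the question of which curve version the inserted vertices' defining triples live on.

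A secondary discrepancy is in the Step~3 amortisation. You invoke a small-to-large (union-by-size) merging discipline, but the procedure always appends the next singleton segment to the single growing bottom bucket, so every segment participates in exactly one merge and there is no small-to-large structure to exploit. The paper's $\log^2$ factor comes instead from the loser tree maintaining, over the segments currently in the bottom bucket, the one with the rightmost left endpoint: each ordered pair triggers at most one undertake event, the loser tree produces $O(k\log k)$ such events in total, and each is processed in $O(\log k)$ time. The remaining pieces of your plan — the $O(nm+M)$ Step~1 bound via a linear scan of the \texttt{VE}-graph, the $O(n+m+M)$ space bound, and the use of the fact that inserting all original monotonicity vertices forces a monotone min-cost path as the termination certificate — do line up with the paper.
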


\begin{figure}
    \centering
    \includegraphics[width  = \linewidth]{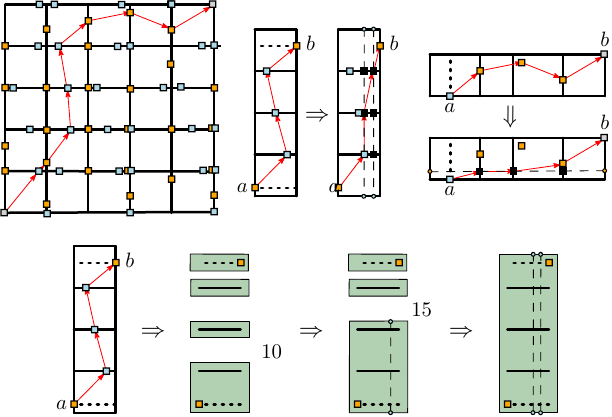}
    \caption{The input is a non-monotone path $P$ in the \texttt{VE}-graph.
We identify all rows and columns where $P$ is non-monotone.
For a column, we isolate the horizontal segments and bucket them. Between buckets, we store the minimum leash at which a monotone path from bottom to top exists.
    }
    \label{fig:shootup}
\end{figure}

\newpage
\section{Lossless Fréchet distance simplification}

The second (perhaps even main) contribution of~\cite{harpeled_et_al:LIPIcs.SoCG.2025.54}  is a black-box lossless simplification algorithm. 
Computing $\fd(\pi, \sigma)$ for large complexity curves $\pi$ and $\sigma$ is an expensive operation.
It would be preferable if we would have access to some lower-complexity curves $(\overline{\pi}, \overline{\sigma})$ where $\fd(\pi, \sigma) = \fd(\overline{\pi}, \overline{\sigma})$. Then we could simply operate on these lower-complexity curves to output the distance. 
Har-Peled, Raichel, and Robson~\cite{harpeled_et_al:LIPIcs.SoCG.2025.54}  can, for any Fréchet computation algorithm, achieve this using any \emph{vertex-restricted} curve simplification: 

\begin{definition}
    We define for $\pi$ a (vertex-restricted) simplification $\overline{\pi}$ 
    as any subset of vertices in $\pi$, ordered by their index.    
    For any $\mu \in \mathbb{R}_{\geq 0}$ we say $\overline{\pi}$ is a $\mu$-simplification if $\fd(\pi, \overline{\pi}) \leq \mu$.
\end{definition}

\noindent
The curve simplification used in~\cite{harpeled_et_al:LIPIcs.SoCG.2025.54} is some heuristic combination of techniques from~\cite{AronovHKWW06, DriemelHW10}.
\noindent

Naïvely, one can $(1+\varepsilon)$-approximate $\fd(\pi, \sigma)$ with simplifications by fixing some $\varepsilon$ and $\mu$. Compute $\mu$-simplifications $(\overline{\pi},\overline{\sigma})$ and set $M=\fd(\overline{\pi},\overline{\sigma})$. 
By the triangle inequality, $
M-2\mu \;\leq\; \fd(\pi,\sigma) \;\leq\; M+2\mu.$
If $\mu\leq \varepsilon M$, then $M$ $(1+\varepsilon)$-approximates $\fd(\pi,\sigma)$; otherwise $\mu$ is decreased cleverly until we terminate   after $O(\log_{1+\varepsilon} n)$ rounds~\cite{DriemelHW10}.

\cite{harpeled_et_al:LIPIcs.SoCG.2025.54} avoids this approximation step by exploiting triangle inequalities more directly. 
Let $(\overline{\pi},\overline{\sigma})$ be any vertex-restricted simplification of $(\pi,\sigma)$ and consider the traversals $\gamma_0(t)$, $\gamma_1(t)$, and $\gamma_2(t)$ realising $\fd(\pi,\overline{\pi})$, $\fd(\overline{\pi},\overline{\sigma})$, and $\fd(\overline{\sigma},\sigma)$, respectively. 
They combine these traversals in linear time into a traversal $\gamma(t)=(\alpha(t),\beta(t))$ between $(\pi,\sigma)$ whose maximum leash length gives an upper bound on $\fd(\pi,\sigma)$. 
Conversely, $\fd(\pi,\sigma)$ is at least $\fd(\overline{\pi},\overline{\sigma})$ minus the Fréchet distances from each curve to its simplification, and we get the following:
\[
\fd(\overline{\pi},\overline{\sigma}) - \fd(\pi,\overline{\pi}) - \fd(\sigma,\overline{\sigma})
\;\leq\; \fd(\pi,\sigma)
\;\leq\;
\max_{t\in[0,1]} \fd(\pi(\alpha(t)),\sigma(\beta(t))).
\]

Computing $\fd(\pi,\overline{\pi})$ and $\fd(\sigma,\overline{\sigma})$ is costly, so they compute greedy traversals $\gamma_0^*(t)$ and $\gamma_2^*(t)$ that over-estimate these values instead, yielding $\fd^*(\pi,\overline{\pi})$ and $\fd^*(\sigma,\overline{\sigma})$. 
Combined, these give a greedy traversal $\gamma^*(t)=(\alpha^*(t),\beta^*(t))$ that over-estimates $\fd(\pi,\sigma)$ and leads to
\begin{equation}
\label{eq:inequality}
LB(\bar\pi,\bar\sigma)
:=\fd(\bar\pi,\bar\sigma)-\fd^*(\pi,\bar\pi)-\fd^*(\sigma,\bar\sigma)
\le \fd(\pi,\sigma)
\le \max_{t\in[0,1]}\fd(\pi(\alpha^*(t)),\sigma(\beta^*(t))).
\end{equation}

\subparagraph{Introducing slack.}
Given $LB(\overline{\pi},\overline{\sigma})$ and $\gamma^*(t)$, each curve vertex $v$ receives a \emph{slack} $
\texttt{slack}(v) := LB(\overline{\pi},\overline{\sigma}) - g_v$,
where $g_v$ is the maximum leash length attained at $v$ along $\gamma^*(t)$ (Figure~\ref{fig:slack}).  
Negative slack marks $v$ as a bottleneck.  
Each vertex $p$ of $\overline{\pi}$ or $\overline{\sigma}$ inherits the minimum slack of all original vertices between $p$ and the next vertex.  
Slack is then heuristically propagated: every vertex adopts the minimum slack among its eight closest neighbours along $\overline{\pi}$ or $\overline{\sigma}$. 
An edge’s slack is the minimum slack of its endpoints.  
They then perform the following:

\begin{quote}
    while there exists a negative-slack edge $e$ on $\overline{\pi}$ (resp.\ $\overline{\sigma}$), and the edge $e$ is not an edge of $\pi$ (resp.\ $\sigma$), un-simplify the edge $e$ (reintroducing missing vertices).
\end{quote}

\noindent
By construction of slack, its propagation, and~\eqref{eq:inequality}, the process ends with $\fd(\pi,\sigma)=\fd(\overline{\pi},\overline{\sigma})$.

\subparagraph{Our contribution.}
Appendix~\ref{sec:lossless} shows that the above technique, while elegant, is overly conservative and somewhat coarse.  
We refine it in two ways.

\clearpage

\begin{figure}[H] \centering \includegraphics[]{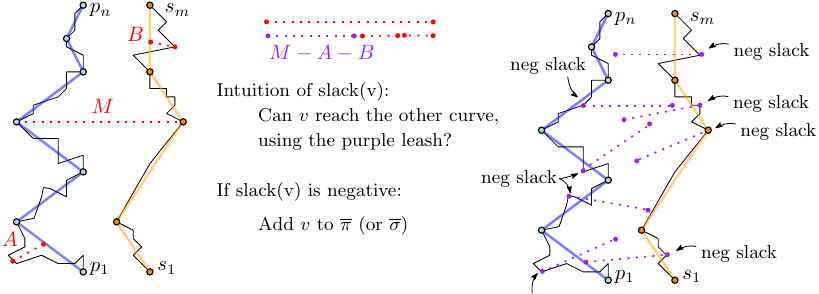} \caption{Curves $(\pi, \sigma)$ and their simplifications. The value $LB(\overline{\pi}, \overline{\sigma})$ is the maximum leash length $M$ between $(\overline{\pi}, \overline{\sigma})$, minus the maximum leash lengths $A$ and $B$ between $(\pi, \overline{\pi})$ and $(\sigma, \overline{\sigma})$.} \label{fig:slack} \end{figure}

\medskip\noindent\emph{(i) Tighter bounds.}
Their lower bound uses the global maximum leash length $M$ between $(\overline{\pi},\overline{\sigma})$, minus the global maxima $A$ and $B$ between $(\pi,\overline{\pi})$ and $(\sigma,\overline{\sigma})$, respectively.  
When $M$, $A$ and $B$ are realised at unrelated parts of the traversal (see Figure~\ref{fig:slack}), this yields an unnecessarily small value $LB(\overline{\pi}, \overline{\sigma}) = M-A-B$. This in turn causes more vertices than needed to trigger unsimplification. 
Using the frog-based framework, we obtain a sharper bound (Figure~\ref{fig:weighted}).  
For each edge of $\overline{\pi}$ we compute the maximum leash length between that edge and $\pi$, and assign this as a negative weight.  
In the \texttt{VE}-graph, the eddys on this edge add this negative weight to their own weight.  
The min-cost path in this reweighted graph yields a lower bound for $\fd(\pi,\sigma)$ by the same triangle-inequality argument as in~\cite{harpeled_et_al:LIPIcs.SoCG.2025.54}, but now based on local rather than global maxima.  
For many instances, this increases the lower bound and therefore the slack of vertices --- causing our algorithm to unsimplify fewer edges. 

\medskip\noindent\emph{(ii) A tighter slack propagation rule.}
Their propagation step takes the minimum slack over each block of eight consecutive vertices. This removes any edge cases for applying the slack, ensuring their correctness. However, this method is somewhat crude.
We tighten this step by propagating slack along at most two edges per vertex, avoiding this coarse eight-neighbour spread. 
\begin{figure}[H] \centering \includegraphics[width = 0.85\linewidth]{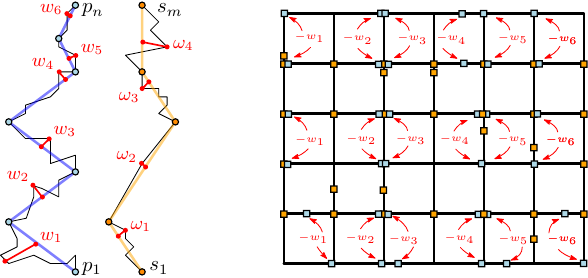} \caption{Given $(\pi, \overline{\pi})$ we can give a weight to each edge of $\overline{\pi}$. Each edge over $\overline{\pi}$ is a column in the parameter space, in which we apply its negative weight on the corresponding eddys.} \label{fig:weighted} \end{figure}

\newpage

\section{Experiments and empirical analysis.}

Our primary contribution is a C++ implementation of our exact variant of the frog-based technique, together with an extensive empirical evaluation. 
We provide two implementations: 
\texttt{Exact\_Dijkstra}, which computes the path $P$ in the \texttt{VE}-graph using Dijkstra’s algorithm in a manner similar to~\cite{harpeled_et_al:LIPIcs.SoCG.2025.54} and yields a traversal with retractability properties; and 
\texttt{Exact\_Sweepline}, which instead employs a sweepline to compute $P$, guaranteeing linear space usage. 
We compare our implementations to \texttt{Unleashed}: the Julia code of~\cite{harpeled_et_al:LIPIcs.SoCG.2025.54}, which is their fastest version. 
To obtain the most accurate output from \texttt{Unleashed}, we invoke \texttt{frechet\_c\_compute} with \texttt{accept\_approximation=false} (this still approximates). 
Finally, we include the \texttt{BKN} algorithm of~\cite{bringmann2019Walking}, the fastest known alternative implementation (which is also inexact). We do not use its CGAL variant, as it is slower, while still being inexact.

\subparagraph{Experiments.}
The analysis in~\cite{harpeled_et_al:LIPIcs.SoCG.2025.54} uses three animal-tracking data sets: \texttt{Geolife}, \texttt{Birds}, and \texttt{Pigeons}. 
These trajectories are goal-oriented. 
Their implementation benefits from this in one major way compared to~\cite{bringmann2019Walking}: as a simplification is applied, this greatly reduces complexity. 
Consequently, they operate on significantly smaller effective input.

The lossless simplification of~\cite{harpeled_et_al:LIPIcs.SoCG.2025.54} applies to any Fréchet algorithm, including~\cite{bringmann2019Walking}. 
We \emph{could} enable a direct comparison by applying it to~\cite{bringmann2019Walking}. Yet, we deliberately refrain from doing so: the simplification incurs overhead, and~\cite{bringmann2019Walking} is so fast that this may actually slow it down. 
Despite not using simplifications, our experiments will show that~\cite{bringmann2019Walking} dominates the frog-based approaches, removing the need to improve it further.
All experiments are single-threaded and executed on an AMD Ryzen 5 7600X (4.7 GHz) with 96\,GB RAM.

\subparagraph{Data.}
Our real-world data sets are summarised in Tables~\ref{tab:data_sets} and \ref{tab:data_sets_slice}. 
The first two (\texttt{Geolife}, \texttt{Pigeons}) are from~\cite{harpeled_et_al:LIPIcs.SoCG.2025.54}. 
We additionally include \texttt{Athens}, \texttt{Chicago}, and \texttt{Berlin} GPS traffic data sets, widely used in Fréchet benchmarks~\cite{ahmed2015comparison, buchin2017clustering, huang2018automatic, wang2015efficient}. 
We include the \texttt{Drifter} data set of~\cite{conradi2023finding}, containing drifting-buoy trajectories.
Finally, we add the UnID data set from~\cite{9827305} which is a traffic dataset captured on German highways.
Except for \texttt{Drifter}, these sets are goal-oriented with minimal backtracking; lossless simplification therefore removes relatively many vertices, which favours the frog-based approaches.
Finally, we include a set $\texttt{OV}$\footnote{obtained via personal communication with one author from \cite{bringmann2019Walking}, available together with our source code} of difficult synthetic pairs of curves that are derived from Orthogonal Vectors instances~\cite{bringmann2019Walking}. 

The \texttt{Geolife} data set alone induces over $170{,}000{,}000$ pairs of curves, so we cannot possibly compute the Fréchet distance between all pairs of curves. 
We follow~\cite{harpeled_et_al:LIPIcs.SoCG.2025.54} and subsample. 
In our case, we compare all $45{,}000$ pairs among a subselection of $300$ curves from the \texttt{Geolife} and \texttt{Pigeons} data sets. We further augment the analysis by comparing all curves from the \texttt{Athens} data set and from the adversarial \texttt{OV} data set. 
To broaden our evaluation, we also include the remaining real-world data sets, where, due to time constraints, we subsample curves and compare all pairs among them. The experiments are primarily bottlenecked by the \texttt{Unleashed} implementation\footnote{A time limit of $5$ minutes was hit multiple times}, and the full evaluation requires several days to complete. Our results indicate no substantial qualitative differences between sampling $100 \times 100$ or $300 \times 300$ curves from the same data.

\newcommand{\ra}[1]{\renewcommand{\arraystretch}{#1}}
\begin{table}[h]
    \centering
    \ra{1.1}
    \begin{tabular}{@{}llrrrl@{}}
    \toprule
      \textbf{Name} & \textbf{Real world} & \textbf{\# curves} & 
      \shortstack{\textbf{total} \\  \textbf{\# vertices}} & \shortstack{\textbf{median} \\  \textbf{\# vertices}} & \shortstack{\textbf{\# of curves} \\  \textbf{compared }} \\
    \midrule
      \texttt{Geolife}  & yes & 18671    & 24895648      & 506  & 300 choose 2\\
      \texttt{Pigeons}  & yes & 539    & 6229319      & 7502  & 300 choose 2\\
      \texttt{Athens}   & yes & 120    & 72439      & 662 & 120 choose 2\\
                OV     & no & 220 & 39060 & 130 &  110 (already pairs)\\
      \texttt{Chicago}  & yes & 888   & 118360  & 121 & 100 choose 2\\
      \texttt{Berlin}   & yes & 27188 & 192223  & 7 & 100 choose 2 \\
      \texttt{Drifter}  & yes & 2012  & 1792084 & 997 & 100 choose 2 \\
      \texttt{UniD}     & yes & 362   & 214076  & 495 & 100 choose 2 \\

    \bottomrule
    \end{tabular}
    \caption{Overview of all data sets used in our experiments.}
    \label{tab:data_sets}
\end{table}

\begin{table}[h]
    \centering
    \ra{1.1}
    \begin{tabular}{@{}llrrrl@{}}
    \toprule
      \textbf{Name} & \textbf{Real world} & \textbf{\# curves} & 
      \shortstack{\textbf{total} \\  \textbf{\# vertices}} & \shortstack{\textbf{median} \\  \textbf{\# vertices}} & \shortstack{\textbf{\# of curves} \\  \textbf{compared }} \\
    \midrule
      \texttt{Geolife}  & yes & 300    & 396676      & 729  & 300 choose 2\\
      \texttt{Pigeons}  & yes & 300    & 2805495      & 5411  & 300 choose 2\\
      \texttt{Athens}   & yes & 120    & 72439      & 662 & 120 choose 2\\
                OV     & no & 220 & 39060 & 130 &  110 (already pairs) \\
      \texttt{Chicago}  & yes & 100   & 12731  & 119 & 100 choose 2 \\
      \texttt{Berlin}   & yes & 100 & 723  & 7 & 100 choose 2 \\
      \texttt{Drifter}  & yes & 100  & 78080 & 804 & 100 choose 2 \\
      \texttt{UniD}     & yes & 100   & 55572  & 421 & 100 choose 2 \\

    \bottomrule
    \end{tabular}
    \caption{Overview of all slices of data sets used in our experiments.}
    \label{tab:data_sets_slice}
\end{table}

\begin{table}[h!]

\centering
\small 
\setlength{\tabcolsep}{3pt}
\begin{tabular}{@{}l|r|r|r@{}}
\toprule
\textbf{Data set} & \textbf{Max} & \textbf{Mean} & \textbf{Median}\\
\midrule

\multicolumn{4}{c}{\textbf{\texttt{Exact}}} \\[0.5em]
\texttt{OV} & 0 & 0 & 0\\
\texttt{athens} & 0 & 0 & 0\\
\texttt{UnID} & 0 & 0 & 0\\
\bottomrule
\multicolumn{4}{c}{\textbf{\texttt{BKN}}} \\[0.5em]
\texttt{OV} & ${3.039939}\cdot 10^{-13}$ & ${2.756980}\cdot 10^{-13}$ & ${3.039939}\cdot 10^{-13}$\\
\texttt{athens} & ${9.337306}\cdot 10^{-10}$ & ${5.865283}\cdot 10^{-12}$ & ${1.269745}\cdot 10^{-12}$\\
\texttt{UnID} & ${9.337306}\cdot 10^{-10}$ & ${5.865283}\cdot 10^{-12}$ & ${1.269745}\cdot 10^{-12}$\\
\bottomrule

\multicolumn{4}{c}{\textbf{\texttt{Unleashed}}} \\[0.5em]
\texttt{OV} & ${4.499666}\cdot 10^{-6}$ & ${1.430129}\cdot 10^{-6}$ & ${1.181895}\cdot 10^{-16}$\\
\texttt{athens} & ${9.267924}\cdot 10^{-6}$ & ${1.948025}\cdot 10^{-7}$ & ${9.974841}\cdot 10^{-15}$\\
\texttt{UnID} & ${9.337306}\cdot 10^{-10}$ & ${5.865283}\cdot 10^{-12}$ & ${1.269745}\cdot 10^{-12}$\\
\bottomrule

\end{tabular}
\caption{Accuracy errors, obtained by converting exact square root distances to a \texttt{double}.}
\label{tab:accuracy}
\end{table}

\subparagraph{Criteria.}
Our primary evaluation metric is running time (including file-reading time). For exactness, our approach is the most accurate if the input coordinates fit within a $32$-bit integer, followed closely by \texttt{BKN}; Table~\ref{tab:accuracy} briefly reports accuracies for the data sets satisfying this condition. We could compare across all data by truncating all input to 32-bit coordinates, but we refrain from doing so since \texttt{BKN} can handle this data natively (albeit inexact).

\newpage
\subsection{Comparing frog-based approaches.}

We first compare our frog-based implementations to \texttt{Unleashed}.  
This comparison is motivated mainly by scientific curiosity: exact-value computation is more difficult than computing an approximation, so one naturally expects a performance penalty.  
Our interest is in quantifying this penalty—in other words, in understanding the practical cost of exactness.  
Table~\ref{tab:stats_frogs} compares the efficacy of the $\texttt{Unleashed}$ implementation versus \texttt{Exact\_Sweepline} (the comparison remains roughly the same when we set for $\texttt{Unleashed}$ the boolean \texttt{accept\_approximation=true}, or when we compare to \texttt{Exact\_Dijkstra} instead).

As expected, the median running time of our exact variants deteriorates.  
However, we note that our exact approach has considerably better worst-case performance, even on this advantageous real-world data. On some instances, \texttt{Unleashed} is a factor 100 slower. Whenever our worst-case is slower, it remains within competitive distance. 
For data sets where curves have sizable complexity, our method becomes the preferred method from the 90th percentile onwards. 
On the mapconstruction data sets, the median number of vertices is very small. The  OV data set is a difficult instance where our exact computations are extremely costly. We note that except for worst-case outliers, the approaches are relatively competitive. 
We think that this data nicely represent the cost of exact value computation, and the benefit of worst-case guarantees for runtime.

\begin{table}[h!]
\centering
\caption{Summary of running times in seconds. For both \texttt{Unleashed} (\texttt{accept\_approx = false}) and \texttt{Exact\_Sweeline}. The best times across the two methods are in blue.}
\label{tab:stats_frogs}
\small 
\setlength{\tabcolsep}{3pt}
\begin{tabular}{@{}l|r|r|r|r|r|r|r|r@{}}
\toprule
\textbf{Data set} & \textbf{Max} & \textbf{Mean} & \textbf{StdDev} & \textbf{Median} & \textbf{\texttt{80th \%ile}} & \textbf{\texttt{90th \%ile}} & \textbf{\texttt{95th \%ile}} & \textbf{\texttt{99th \%ile}} \\
\midrule
\multicolumn{9}{c}{\textbf{\texttt{Unleashed (accept\_approx=false)}}} \\[0.5em]
\texttt{geolife} & 233.6160 & 0.3673 & 3.7389 & \textcolor{blue!70!black}{\textbf{0.0069}} & \textcolor{blue!70!black}{\textbf{0.0621}} & 0.3263 & 1.0108 & 6.4489 \\
\texttt{pigeons} & >300 & 0.8508 & 9.0841 & \textcolor{blue!70!black}{\textbf{0.0222}} & \textcolor{blue!70!black}{\textbf{0.1645}} & 0.9395 & 3.7772 & 13.5008 \\
\texttt{athens} & 29.1741 & 0.2133 & 1.3476 & \textcolor{blue!70!black}{\textbf{0.0190}} & \textcolor{blue!70!black}{\textbf{0.1091}} & \textcolor{blue!70!black}{\textbf{0.2808}} & \textcolor{blue!70!black}{\textbf{0.5125}} & 3.6473 \\
\texttt{OV} & \textcolor{blue!70!black}{\textbf{24.6879}} & \textcolor{blue!70!black}{\textbf{1.2132}} & 3.2387 & \textcolor{blue!70!black}{\textbf{0.0580}} & \textcolor{blue!70!black}{\textbf{1.1886}} & \textcolor{blue!70!black}{\textbf{3.7313}} & \textcolor{blue!70!black}{\textbf{5.2031}} & \textcolor{blue!70!black}{\textbf{15.0121}} \\
\texttt{chicago} & 0.4679 & \textcolor{blue!70!black}{\textbf{0.0036}} & 0.0147 & \textcolor{blue!70!black}{\textbf{0.0024}} & \textcolor{blue!70!black}{\textbf{0.0040}} & \textcolor{blue!70!black}{\textbf{0.0053}} & \textcolor{blue!70!black}{\textbf{0.0063}} & \textcolor{blue!70!black}{\textbf{0.0088}} \\
\texttt{berlin} & 0.0059 & \textcolor{blue!70!black}{\textbf{0.0001}} & 0.0001 & \textcolor{blue!70!black}{\textbf{0.0001}} & \textcolor{blue!70!black}{\textbf{0.0001}} & \textcolor{blue!70!black}{\textbf{0.0001}} & \textcolor{blue!70!black}{\textbf{0.0001}} & \textcolor{blue!70!black}{\textbf{0.0002}} \\
\texttt{uniD} & \textcolor{blue!70!black}{\textbf{0.1402}} & \textcolor{blue!70!black}{\textbf{0.0023}} & 0.0048 & \textcolor{blue!70!black}{\textbf{0.0019}} & \textcolor{blue!70!black}{\textbf{0.0027}} & \textcolor{blue!70!black}{\textbf{0.0032}} & \textcolor{blue!70!black}{\textbf{0.0041}} & \textcolor{blue!70!black}{\textbf{0.0076}} \\
\texttt{drifter} & 61.2818 & 0.1199 & 1.1781 & \textcolor{blue!70!black}{\textbf{0.0091}} & \textcolor{blue!70!black}{\textbf{0.0555}} & 0.1919 & 0.4474 & 1.9376 \\
\bottomrule
\multicolumn{9}{c}{\textbf{\texttt{Exact (with\_dijkstra=false)}}} \\[0.5em]
\texttt{geolife} & \textcolor{blue!70!black}{\textbf{17.5543}} & \textcolor{blue!70!black}{\textbf{0.0984}} & 0.1903 & 0.0505 & 0.1330 & \textcolor{blue!70!black}{\textbf{0.2312}} & \textcolor{blue!70!black}{\textbf{0.3468}} & \textcolor{blue!70!black}{\textbf{0.7444}} \\
\texttt{pigeons} & \textcolor{blue!70!black}{\textbf{2.9708}} & \textcolor{blue!70!black}{\textbf{0.1649}} & 0.1472 & 0.1265 & 0.2305 & \textcolor{blue!70!black}{\textbf{0.3153}} & \textcolor{blue!70!black}{\textbf{0.4130}} & \textcolor{blue!70!black}{\textbf{0.7262}} \\
\texttt{athens} & \textcolor{blue!70!black}{\textbf{2.6424}} & \textcolor{blue!70!black}{\textbf{0.1617}} & 0.2359 & 0.0851 & 0.2136 & 0.3816 & 0.5906 & \textcolor{blue!70!black}{\textbf{1.2356}} \\
\texttt{OV} & 28.2322 & 4.5376 & 6.0343 & 1.5410 & 8.5852 & 13.5812 & 16.9130 & 23.8446 \\
\texttt{chicago} & \textcolor{blue!70!black}{\textbf{0.0522}} & 0.0075 & 0.0073 & 0.0043 & 0.0124 & 0.0179 & 0.0231 & 0.0326 \\
\texttt{berlin} & \textcolor{blue!70!black}{\textbf{0.0029}} & 0.0004 & 0.0002 & 0.0003 & 0.0004 & 0.0005 & 0.0007 & 0.0014 \\
\texttt{uniD} & 0.2809 & 0.0077 & 0.0102 & 0.0064 & 0.0096 & 0.0114 & 0.0139 & 0.0266 \\
\texttt{drifter} & \textcolor{blue!70!black}{\textbf{1.6604}} & \textcolor{blue!70!black}{\textbf{0.0743}} & 0.0889 & 0.0505 & 0.1123 & \textcolor{blue!70!black}{\textbf{0.1609}} & \textcolor{blue!70!black}{\textbf{0.2215}} & \textcolor{blue!70!black}{\textbf{0.4069}} \\
\bottomrule
\end{tabular}
\end{table}

\subsection{Comparing Fréchet distance computations.}

We now provide a  comparison between our exact frog-based method, the approach of~\cite{harpeled_et_al:LIPIcs.SoCG.2025.54}, and the state-of-the-art \texttt{BKN} implementation~\cite{bringmann2019Walking}.  
In this wider evaluation, we consider two configurations of \texttt{Unleashed}: the version with \texttt{accept\_approximation=false} (which, still returns an approximation) and the supposedly faster configuration that sets \texttt{accept\_approximation=true}.
Our exact implementations use either Dijkstra or a sweepline.

\begin{table}[h!]

\centering
\caption{Running time statistics (s). The best running time in bold, the second best in blue.}
\label{tab:stats_all}
\small 
\setlength{\tabcolsep}{3pt}
\begin{tabular}{@{}l|r|r|r|r|r|r|r|r@{}}
\toprule
\textbf{Data set} & \textbf{Max} & \textbf{Mean} & \textbf{StdDev} & \textbf{Median} 
& \textbf{\texttt{80th \%ile}} & \textbf{\texttt{90th \%ile}} 
& \textbf{\texttt{95th \%ile}} & \textbf{\texttt{99th \%ile}} \\
\midrule
\multicolumn{9}{c}{\textbf{\texttt{BKN}}} \\[0.5em]
\texttt{geolife}    & \textbf{0.0767} & \textbf{0.0024} & 0.0016 & \textbf{0.0021} & \textbf{0.0027} & \textbf{0.0034} & \textbf{0.0042} & \textbf{0.0070} \\
\texttt{pigeons}   & \textbf{0.1143} & \textbf{0.0040} & 0.0022 & \textbf{0.0035} & \textbf{0.0048} & \textbf{0.0066} & \textbf{0.0076} & \textbf{0.0102} \\
\texttt{athens}    & \textbf{0.0075} & \textbf{0.0027} & 0.0008 & \textbf{0.0025} & \textbf{0.0033} & \textbf{0.0038} & \textbf{0.0043} & \textbf{0.0055} \\
\texttt{OV}    & \textbf{0.0728} & \textbf{0.0208} & 0.0206 & \textbf{0.0111} & \textbf{0.0376} & \textbf{0.0580} & \textbf{0.0629} & \textbf{0.0682} \\
\texttt{chicago} & \textbf{0.0047} & \textbf{0.0018} & 0.0002 & \textbf{0.0018} & \textbf{0.0019} & \textbf{0.0020} & \textbf{0.0021} & \textbf{0.0024} \\
\texttt{berlin} & \textcolor{blue!70!black}{\textbf{0.0032}} & 0.0016 & 0.0002 & 0.0015 & 0.0017 & 0.0018 & 0.0018 & 0.0019 \\
\texttt{uniD} & \textbf{0.0046} & \textbf{0.0018} & 0.0002 & \textbf{0.0018} & \textbf{0.0020} & \textbf{0.0021} & \textbf{0.0022} & \textbf{0.0024} \\
\texttt{drifter}  & \textbf{0.0144} & \textbf{0.0024} & 0.0009 & \textbf{0.0022} & \textbf{0.0027} & \textbf{0.0032} & \textbf{0.0037} & \textbf{0.0060} \\

\bottomrule

\multicolumn{9}{c}{\textbf{\texttt{Unleashed (accept\_approx=true)}}} \\[0.5em]
\texttt{geolife} & 250.5627 & 0.3367 & 3.5109 & 0.0074 & 0.0623 & 0.3244 & 0.9207 & 5.8376 \\
\texttt{pigeons}    & 274.5574 & 0.7529 & 4.6066 & 0.0225 & \textcolor{blue!70!black}{\textbf{0.1623}} & 0.9021 & 3.7495 & 13.2745 \\
\texttt{athens}    & 29.4879 & 0.1754 & 1.1123 & 0.0195 & \textcolor{blue!70!black}{\textbf{0.1081}} & 0.2415 & 0.4817 & 2.4792 \\
\texttt{OV}    & 24.8995 & 1.2463 & 3.2985 & 0.0609 & 1.3038 & 3.8347 & 5.5689 & 15.1980 \\
\texttt{chicago} & 0.5972 & \textcolor{blue!70!black}{\textbf{0.0035}} & 0.0144 & \textcolor{blue!70!black}{\textbf{0.0023}} & \textcolor{blue!70!black}{\textbf{0.0040}} & 0.0054 & 0.0064 & 0.0093 \\
\texttt{berlin} & 0.1214 & \textbf{0.0001} & 0.0018 & \textbf{0.0001} & \textbf{0.0001} & \textbf{0.0001} & \textcolor{blue!70!black}{\textbf{0.0002}} & \textbf{0.0002} \\
\texttt{uniD} & \textcolor{blue!70!black}{\textbf{0.1239}} & 0.0024 & 0.0045 & 0.0020 & 0.0029 & 0.0035 & 0.0042 & 0.0086 \\
\texttt{drifter} & 60.7184 & 0.1177 & 1.1667 & 0.0095 & \textcolor{blue!70!black}{\textbf{0.0529}} & 0.1730 & 0.4747 & 1.8644 \\
\bottomrule

\multicolumn{9}{c}{\textbf{\texttt{Unleashed (accept\_approx=false)}}} \\[0.5em]
\texttt{geolife} & 233.6160 & 0.3673 & 3.7389 & \textcolor{blue!70!black}{\textbf{0.0069}} & \textcolor{blue!70!black}{\textbf{0.0621}} & 0.3263 & 1.0108 & 6.4489 \\
\texttt{pigeons}    & >300 & 0.8508 & 9.0841 & \textcolor{blue!70!black}{\textbf{0.0222}} & 0.1645 & 0.9395 & 3.7772 & 13.5008 \\
\texttt{athens}    & 29.1741 & 0.2133 & 1.3476 & \textcolor{blue!70!black}{\textbf{0.0190}} & 0.1091 & 0.2808 & 0.5125 & 3.6473 \\
\texttt{OV}    & \textcolor{blue!70!black}{\textbf{24.6879}} & \textcolor{blue!70!black}{\textbf{1.2132}} & 3.2387 & \textcolor{blue!70!black}{\textbf{0.0580}} & \textcolor{blue!70!black}{\textbf{1.1886}} & \textcolor{blue!70!black}{\textbf{3.7313}} & \textcolor{blue!70!black}{\textbf{5.2031}} & \textcolor{blue!70!black}{\textbf{15.0121}} \\
\texttt{chicago} & 0.4679 & 0.0036 & 0.0147 & 0.0024 & 0.0040 & \textcolor{blue!70!black}{\textbf{0.0053}} & \textcolor{blue!70!black}{\textbf{0.0063}} & \textcolor{blue!70!black}{\textbf{0.0088}} \\
\texttt{berlin} & 0.0059 & \textbf{0.0001} & 0.0001 & \textbf{0.0001} & \textbf{0.0001} & \textbf{0.0001} & \textbf{0.0001} & \textbf{0.0002} \\
\texttt{uniD} & 0.1402 & \textcolor{blue!70!black}{\textbf{0.0023}} & 0.0048 & \textcolor{blue!70!black}{\textbf{0.0019}} & \textcolor{blue!70!black}{\textbf{0.0027}} & \textcolor{blue!70!black}{\textbf{0.0032}} & \textcolor{blue!70!black}{\textbf{0.0041}} & \textcolor{blue!70!black}{\textbf{0.0076}} \\
\texttt{drifter} & 61.2818 & 0.1199 & 1.1781 & \textcolor{blue!70!black}{\textbf{0.0091}} & 0.0555 & 0.1919 & 0.4474 & 1.9376 \\
\bottomrule

\multicolumn{9}{c}{\textbf{\texttt{Exact (with\_dijkstra=true)}}} \\[0.5em]
\texttt{geolife} & \textcolor{blue!70!black}{\textbf{7.6259}} & 0.1028 & 0.1874 & 0.0525 & 0.1381 & 0.2414 & 0.3633 & 0.7742 \\
\texttt{pigeons}    & 3.1337 & 0.1672 & 0.1563 & 0.1268 & 0.2322 & 0.3201 & 0.4239 & 0.7432 \\
\texttt{athens}    & 5.3387 & 0.1975 & 0.3426 & 0.0912 & 0.2434 & 0.4529 & 0.7634 & 1.6910 \\
\texttt{OV}    & 24.4039 & 3.3321 & 5.0168 & 0.7700 & 6.7386 & 10.2982 & 14.0325 & 19.9612 \\
\texttt{chicago}  & 0.0543 & 0.0076 & 0.0076 & 0.0043 & 0.0122 & 0.0182 & 0.0241 & 0.0344 \\
\texttt{berlin} & 0.0037 & \textcolor{blue!70!black}{\textbf{0.0004}} & 0.0002 & \textcolor{blue!70!black}{\textbf{0.0003}} & \textcolor{blue!70!black}{\textbf{0.0004}} & \textcolor{blue!70!black}{\textbf{0.0004}} & 0.0006 & \textcolor{blue!70!black}{\textbf{0.0014}} \\
\texttt{uniD} & 0.2445 & 0.0077 & 0.0098 & 0.0064 & 0.0095 & 0.0114 & 0.0138 & 0.0271 \\
\texttt{drifter} & 1.7855 & 0.0806 & 0.1061 & 0.0519 & 0.1182 & 0.1748 & 0.2428 & 0.4936 \\
\bottomrule

\multicolumn{9}{c}{\textbf{\texttt{Exact (with\_dijkstra=false)}}} \\[0.5em]
\texttt{geolife} & 17.5543 & \textcolor{blue!70!black}{\textbf{0.0984}} & 0.1903 & 0.0505 & 0.1330 & \textcolor{blue!70!black}{\textbf{0.2312}} & \textcolor{blue!70!black}{\textbf{0.3468}} & \textcolor{blue!70!black}{\textbf{0.7444}} \\
\texttt{pigeons}    & \textcolor{blue!70!black}{\textbf{2.9708}} & \textcolor{blue!70!black}{\textbf{0.1649}} & 0.1472 & 0.1265 & 0.2305 & \textcolor{blue!70!black}{\textbf{0.3153}} & \textcolor{blue!70!black}{\textbf{0.4130}} & \textcolor{blue!70!black}{\textbf{0.7262}} \\
\texttt{athens}    & \textcolor{blue!70!black}{\textbf{2.6424}} & \textcolor{blue!70!black}{\textbf{0.1617}} & 0.2359 & 0.0851 & 0.2136 & \textcolor{blue!70!black}{\textbf{0.3816}} & \textcolor{blue!70!black}{\textbf{0.5906}} & \textcolor{blue!70!black}{\textbf{1.2356}} \\
\texttt{OV}    & 28.2322 & 4.5376 & 6.0343 & 1.5410 & 8.5852 & 13.5812 & 16.9130 & 23.8446 \\
\texttt{chicago} & \textcolor{blue!70!black}{\textbf{0.0522}} & 0.0075 & 0.0073 & 0.0043 & 0.0124 & 0.0179 & 0.0231 & 0.0326 \\
\texttt{berlin} & \textbf{0.0029} & \textcolor{blue!70!black}{\textbf{0.0004}} & 0.0002 & \textcolor{blue!70!black}{\textbf{0.0003}} & \textcolor{blue!70!black}{\textbf{0.0004}} & 0.0005 & 0.0007 & \textcolor{blue!70!black}{\textbf{0.0014}} \\
\texttt{uniD} & 0.2809 & 0.0077 & 0.0102 & 0.0064 & 0.0096 & 0.0114 & 0.0139 & 0.0266 \\
\texttt{drifter} & \textcolor{blue!70!black}{\textbf{1.6604}} & \textcolor{blue!70!black}{\textbf{0.0743}} & 0.0889 & 0.0505 & 0.1123 & \textcolor{blue!70!black}{\textbf{0.1609}} & \textcolor{blue!70!black}{\textbf{0.2215}} & \textcolor{blue!70!black}{\textbf{0.4069}} \\
\bottomrule
\end{tabular}
\end{table}

\subsection{Results}

Table~\ref{tab:stats_all} shows the full set of experimental results.
These plots highlight the volatility of the frog-based approach: the standard deviation of these approaches is considerably larger than \texttt{BKN} (although lessened for our implementation). 
Regardless of evaluation criterion (maximum, mean or median running time) the \texttt{BKN} algorithm dominates in performance\footnote{Due to technical constraints, there is system overhead influencing the measured time for \texttt{BKN}. This is particularly noticeable for instances of the \texttt{berlin} data set since they have so few vertices.}. 

We illustrate the performance profile of these algorithms using Dolan--Moré plots (Figures~\ref{fig:dolan_geolife}--\ref{fig:dolan_OV}). 
These plots can be read as follows.  
The $x$-axis shows a factor $\tau \ge 1$, indicating how many times slower an algorithm may be compared to the best on each instance.  
The $y$-axis shows the percentage of instances on which this holds. 
Thus, if an algorithm's plot contains the point $(p, \tau)$ then on $p$ percent of the instances, it is $\tau$ times slower than the best performing algorithm. 
Similarly, $\tau=1$ reveals how often each algorithm is the best, while a horizontal line such as $p=90$ indicates how much slower the worst 10\% of cases become (which, for \texttt{BKN}, is minimal).  
As $\tau$ increases, the curves show how quickly each algorithm “catches up’’.
An algorithm is considered better if its curve approaches the horizontal line $p = 100$ more rapidly.  
These profiles mirror our observations: \texttt{BKN} is consistent; our method lags on median instances, however if the data set contains high enough complexity curves, it frequently and significantly overtakes \texttt{Unleashed} as we approach the worst case.

\begin{figure}[H]
    \centering
    \includegraphics[width=0.49\textwidth]{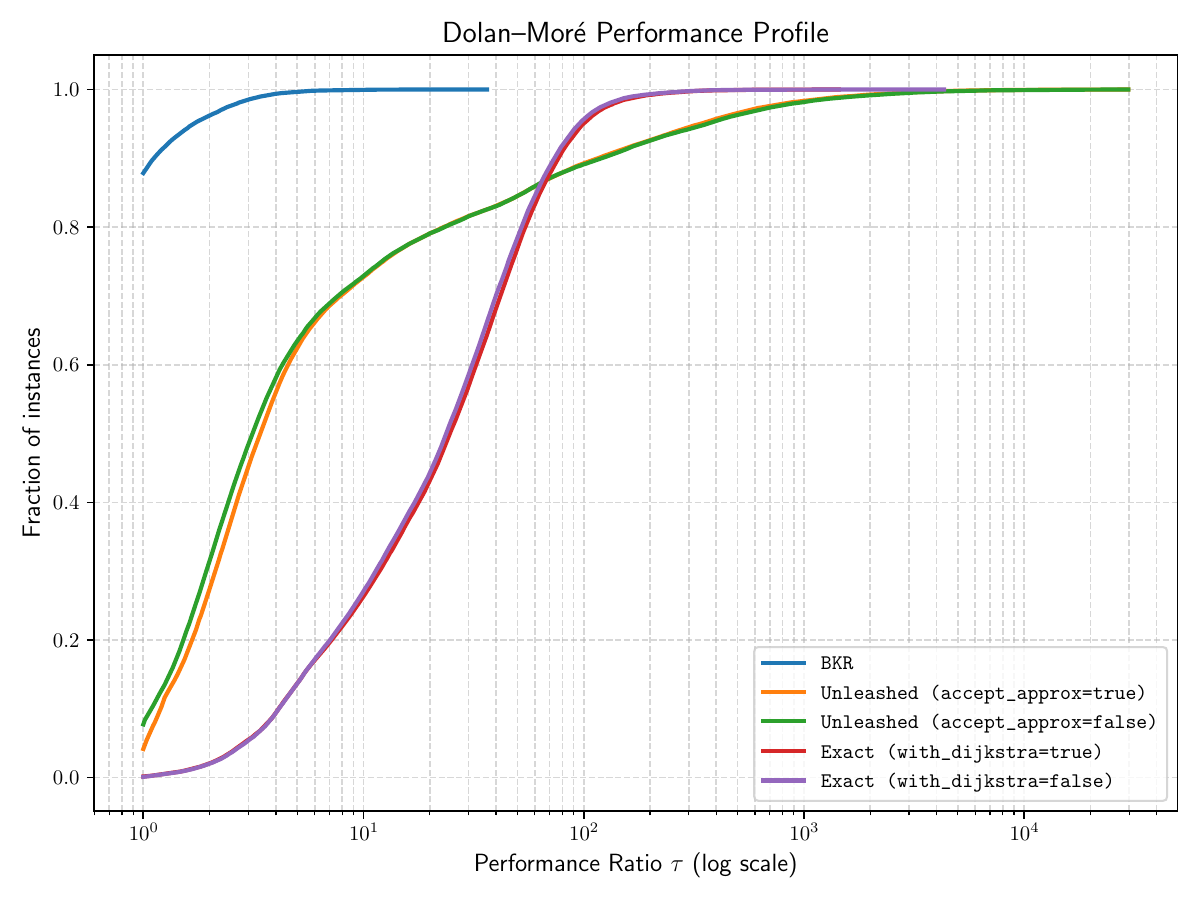}
    \includegraphics[width=0.49\textwidth]{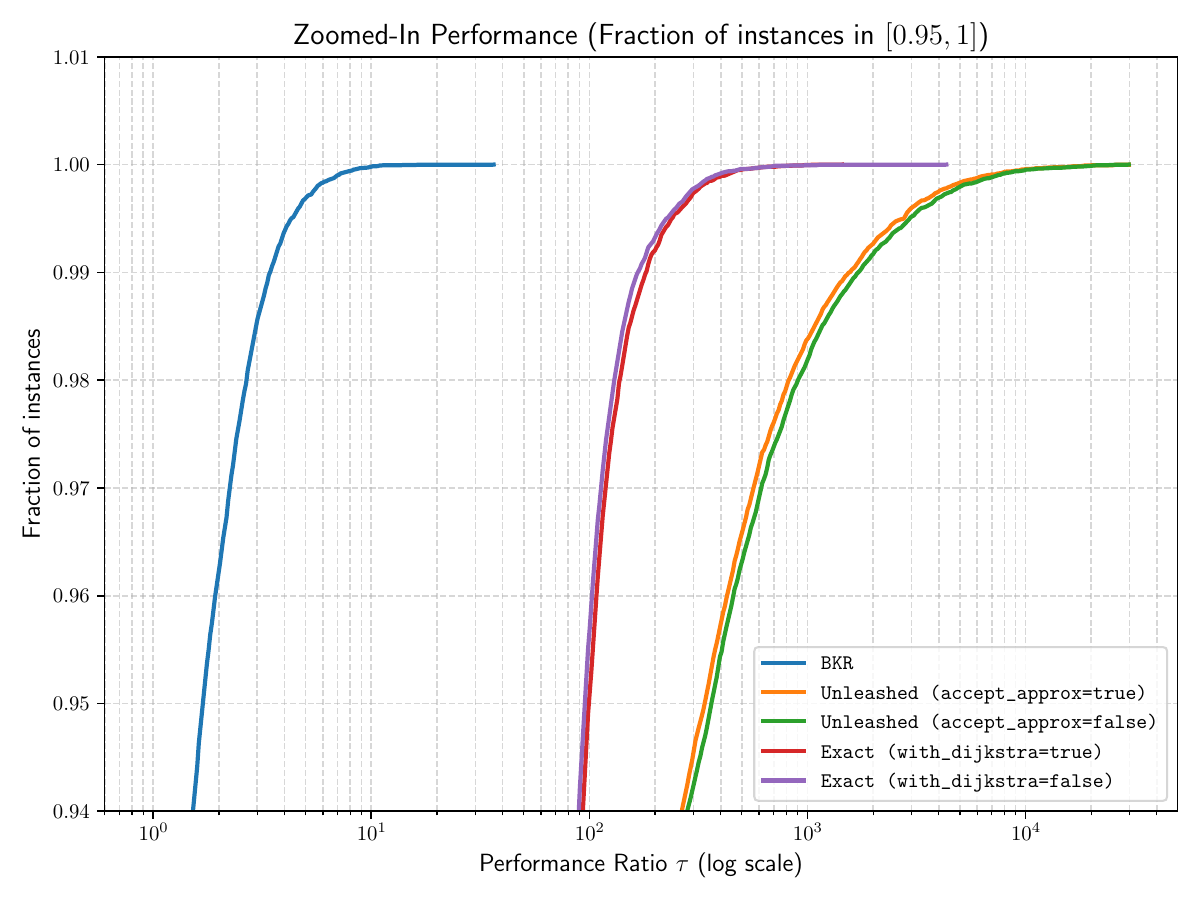}
    \caption{Dolan--Moré performance profile on all pairs of the first 300 curves from the \texttt{geolife} data set (44850 in total) from \cite{harpeled_et_al:LIPIcs.SoCG.2025.54}.}
    \label{fig:dolan_geolife}
\end{figure}

\begin{figure}[H]
    \centering
    \includegraphics[width=0.49\textwidth]{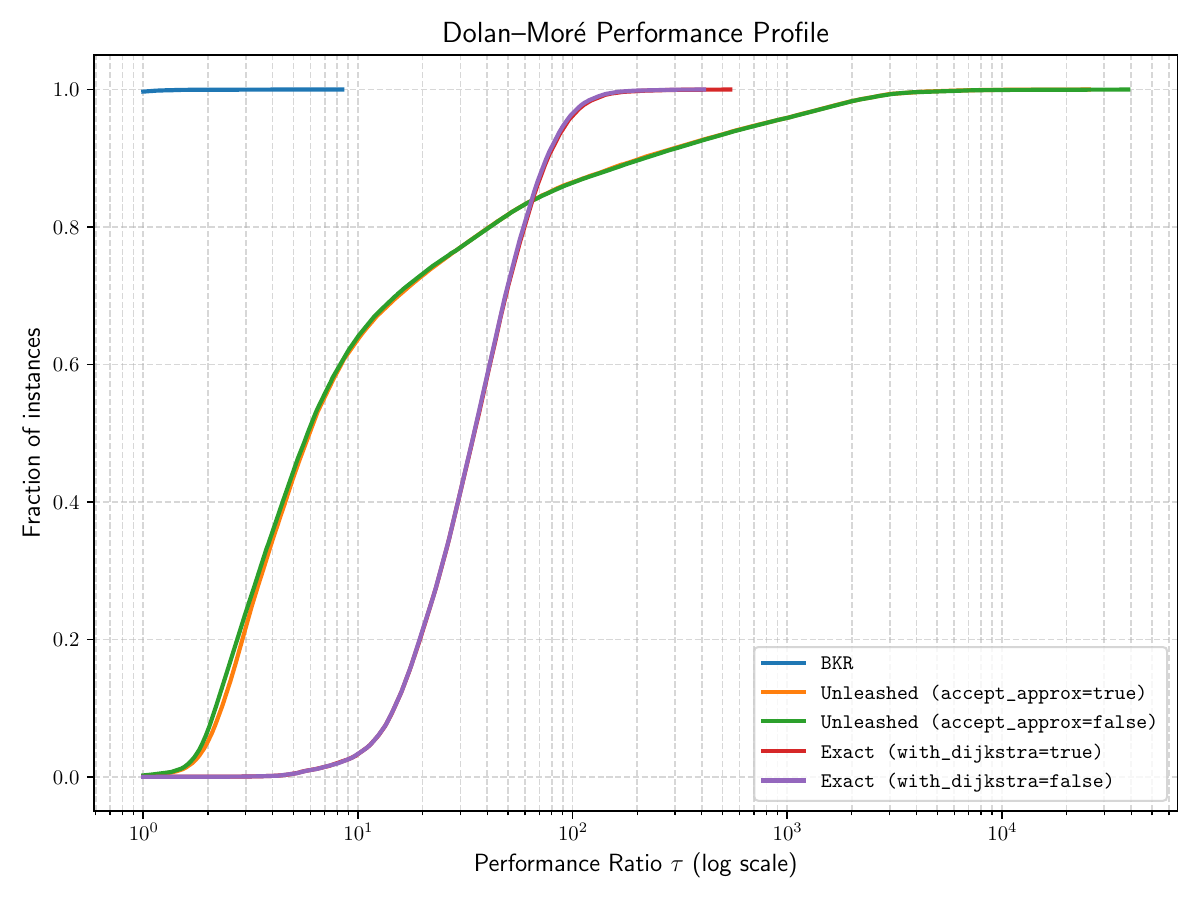}
    \includegraphics[width=0.49\textwidth]{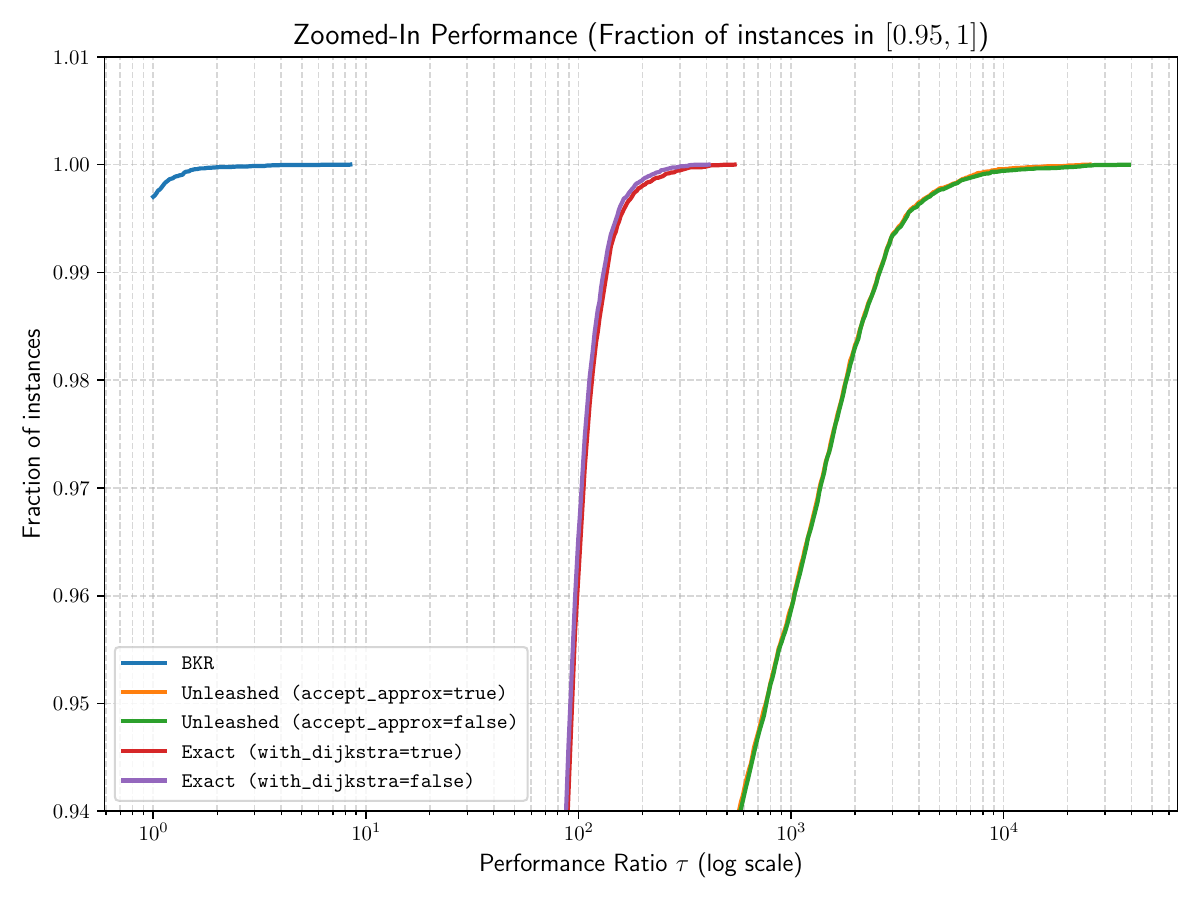}
    \caption{Dolan--Moré performance profile on all pairs of the first 300 curves from the \texttt{pigeons} data set (44850 in total) from \cite{harpeled_et_al:LIPIcs.SoCG.2025.54}.}
    \label{fig:dolan_pigeons}
\end{figure}

\begin{figure}[H]
    \centering
    \includegraphics[width=0.49\textwidth]{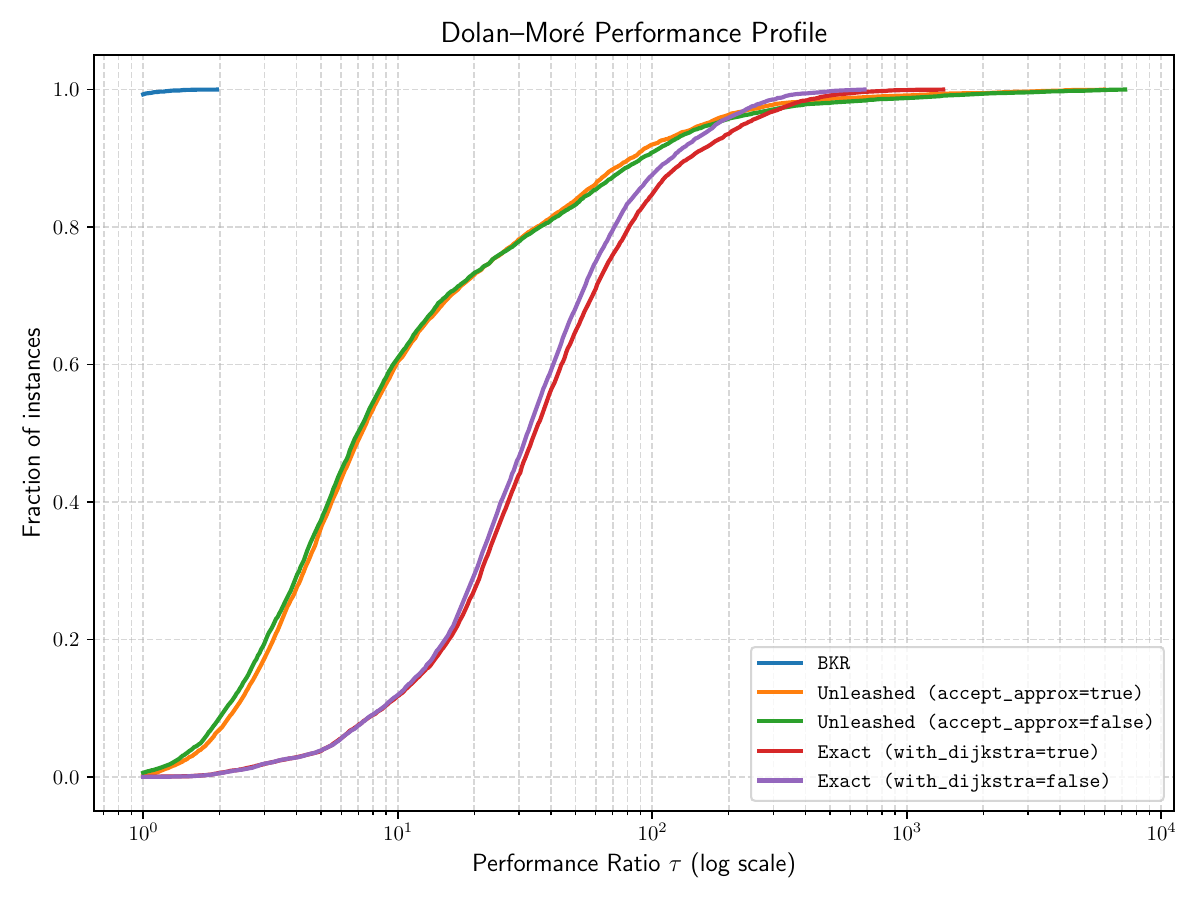}
    \includegraphics[width=0.49\textwidth]{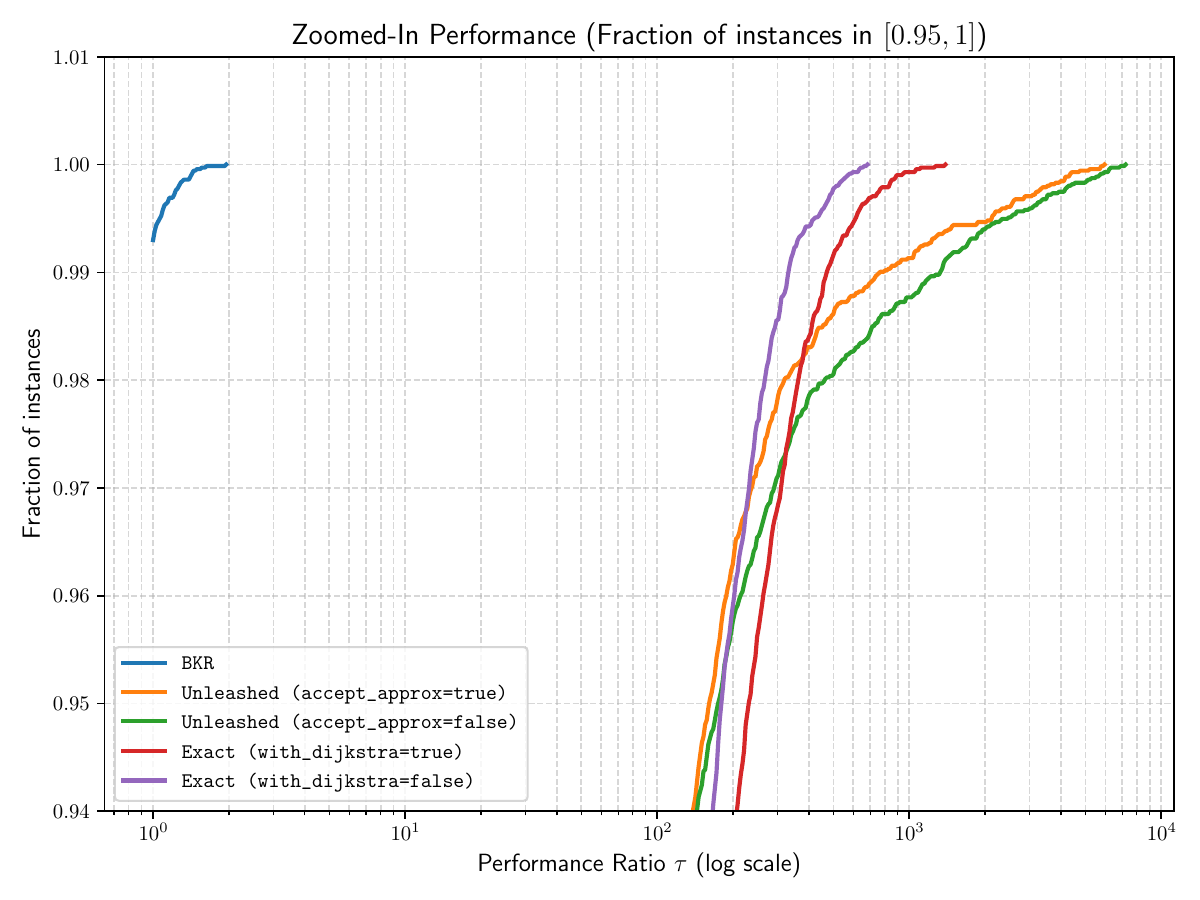}
    \caption{Dolan--Moré profile on pairs from the \texttt{athens} data set (7139 in total) from \cite{vanderhoo2025Efficient}.}
    \label{fig:dolan_athens}
\end{figure}

\begin{figure}[H]
    \centering
    \includegraphics[width=0.49\textwidth]{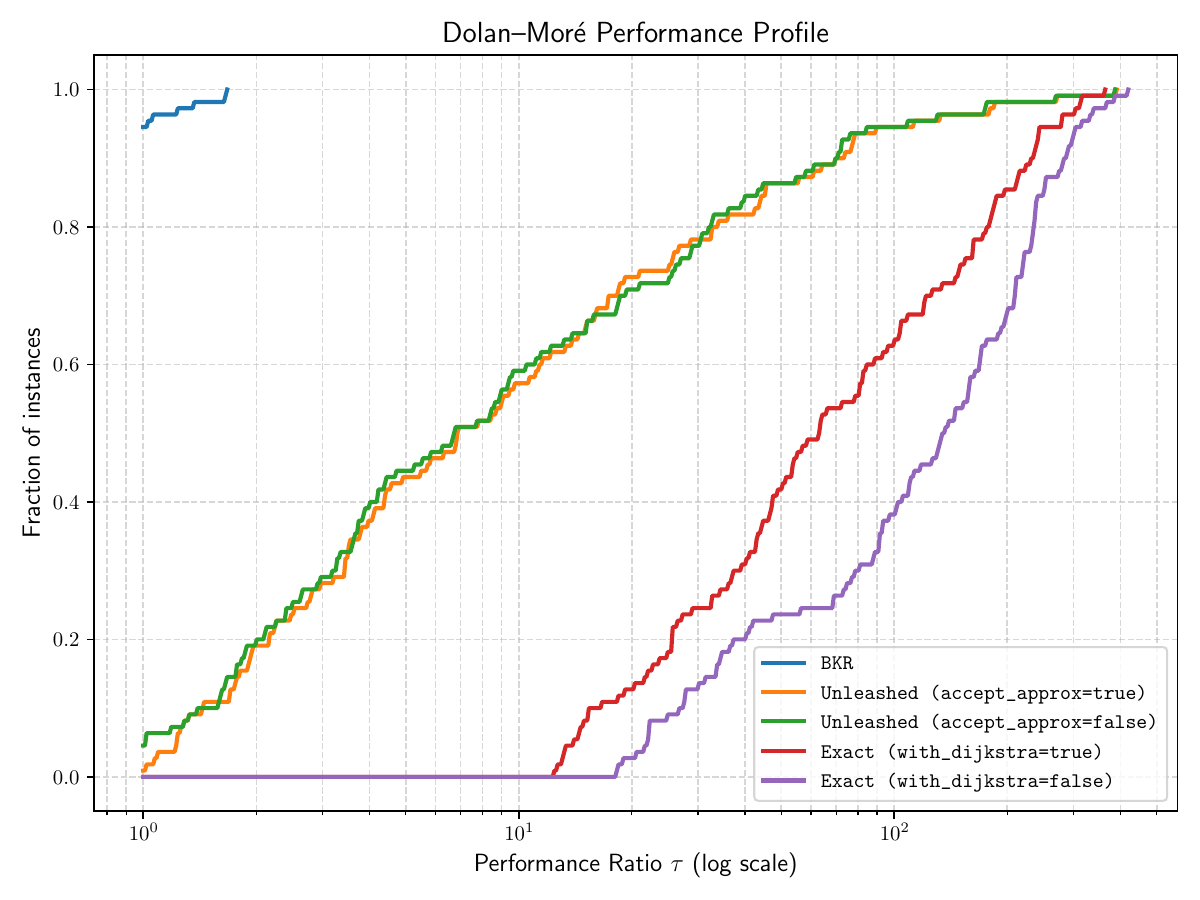}
    \includegraphics[width=0.49\textwidth]{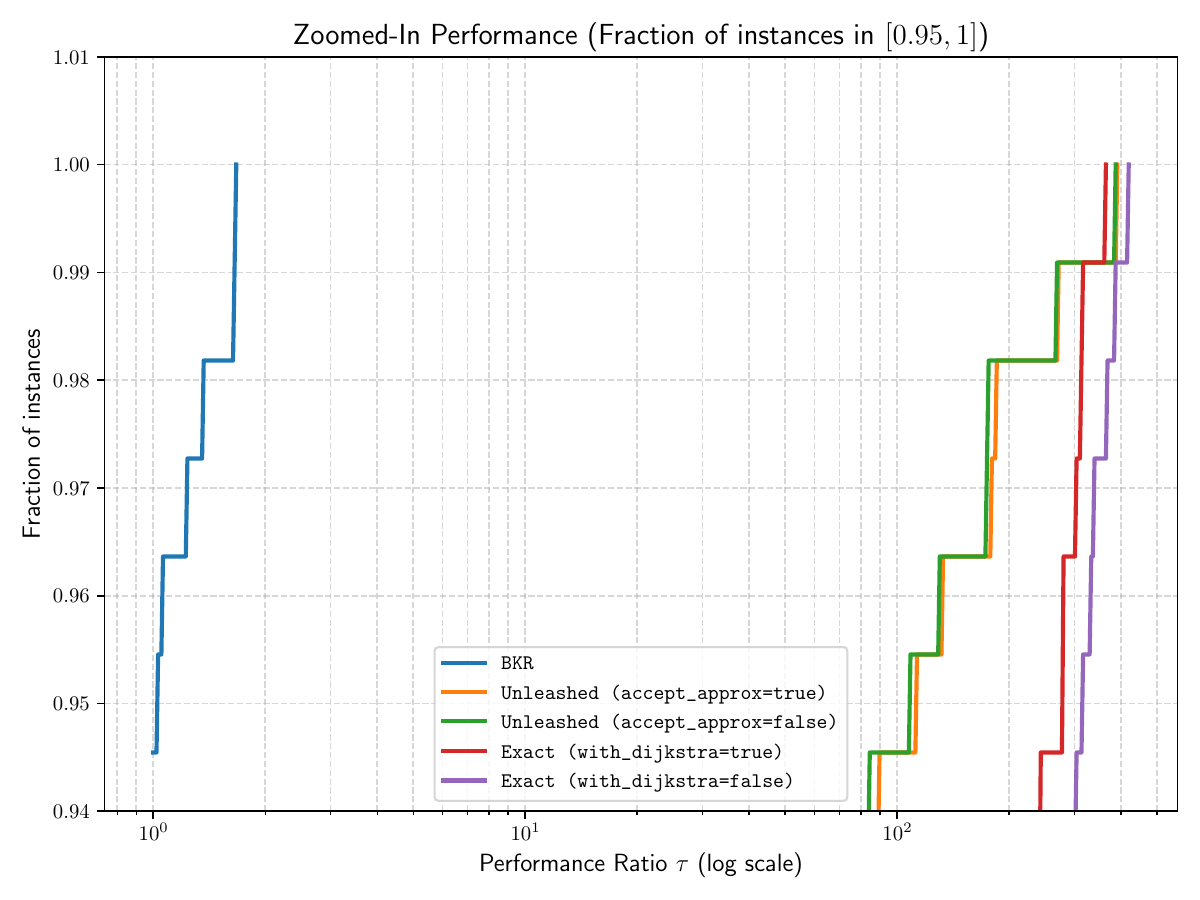}
    \caption{Dolan--Moré performance profile on all pairs from the \texttt{OV} data set (110 in total).}
    \label{fig:dolan_OV}
\end{figure}

\newpage
\section{Conclusion}

We studied the frog-based framework for computing the Fréchet distance introduced in~\cite{harpeled_et_al:LIPIcs.SoCG.2025.54}.  
On the theoretical side, we identified three aspects of the original algorithm that incur avoidable inaccuracies in the distance computation or introduce overhead.  
Specifically, the method of~\cite{harpeled_et_al:LIPIcs.SoCG.2025.54} consists of two main components.  
The first is the computation of the Fréchet distance between $(\pi,\sigma)$ via a discrete weighted graph whose complexity is increased iteratively each round, through a process they call \emph{refinement}. 
Their process heuristically bisects edges of $(\pi,\sigma)$, increasing the graph complexity, until the discrete min-cost path through this graph realises the Fréchet distance. 
We provide a new refinement scheme that offers worst-case guarantees on this process.  
Our more elaborate refinement is careful not to drastically increase the time per round, yet it yields an explicit upper bound on the convergence rate ---  which we consider theoretically meaningful.

The second component is a lossless curve simplification procedure that produces curves $(\overline{\pi},\overline{\sigma})$ of lower complexity such that $\fd(\pi,\sigma)=\fd(\overline{\pi},\overline{\sigma})$ --- effectively reducing the input size.  
Although correct, the original simplification is overly conservative: it relies on a lower bound that is unnecessarily weak, and it avoids geometric corner cases by aggressively propagating vertex slack across a constant but wide neighbourhood.  
Exploiting the weighted-graph structure already inherent in the frog-based method, we present a more refined construction that attains a higher lower bound.  
Consequently, our simplifications have lower complexity.  

Our primary contribution is an empirical evaluation.  
We supply an openly available test suite, which enables a reproducible comparison between frog-based implementations and the highly optimised algorithm of Bringmann, Künnemann and Nusser~\cite{bringmann2019Walking}.  
Our results contrast the findings in~\cite{harpeled_et_al:LIPIcs.SoCG.2025.54}:  
On our broader range of real-world data sets, \texttt{BKN} consistently outperforms both frog-based variants.  
Notably, after removing near-duplicate consecutive points through simplification, the data set curves have \emph{very} few vertices remaining for modern hardware. We have reason to believe that, on larger inputs, the performance gap between our method and \texttt{Unleashed} widens. However, we are not aware of real-world benchmark sets whose median complexity exceeds 1{,}000 ``distinct'' vertices, and random data behaves entirely differently under lossless simplification. We argue that existing benchmarks are becoming outdated and that the community should develop more modern ones.

At first glance, our findings might seem discouraging: our implementation and analysis describe an approach that does not, in practice, match the performance of the current state-of-the-art.  
However, we consider this outcome important for the community.  
Our study offers a transparent and verifiable assessment of the frog-based method and, contrary to earlier claims, demonstrates that its practical behaviour is volatile.

At the same time, our analysis reaffirms that the frog-based framework is theoretically compelling (if you ask us, particularly in its exact variant).
We show that this method in particular enables a significantly more fine-grained lossless simplification procedure than general-purpose methods.  
Nevertheless, our experiments indicate that its main value lies in its theoretical insights rather than in practical performance.

\bibliography{refs.bib}

\appendix

\section{Generating monotonicity events}
\label{app:monotonicity}

Let $\pi$ and $\sigma$ be two curves whose original complexity had $n$ and $m$ vertices respectively, where we added $M$ monotonicity vertices to them.
Consider their \texttt{VE}-graph and let $P$ be a min-cost path from $(1, 1)$ to $(n, m)$. 
We can identify each row $R$ and column $C$ in which $P$ is non-monotone.
We present a subroutine algorithm, whose input the input is a column $C$ where $P$ is not $x$-monotone, corresponding to an edge $e$ of $\pi$. Let $P$, restricted to $C$, go from $a$ to $b$.
We show how to introduce a minimum-size set of vertices $V$ along $e$ such that, in the resulting denser \texttt{VE}-graph, the minimum cost path from $a$ to $b$ is monotone. 
We show that we can apply this subroutine to row $R$ and column $C$ in $O( (n+m+M) \log^2 (n+m+M))$ total time.

\subsection{Solving the subproblem.}
Given $C$ and $a$ and $b$, we create two new cells by extending a horizontal segment from $a$ and $b$. We discard all cells of $C$ that do not intersect $P$. 
This creates a new set of cells $C_1, \ldots, C_k$ ordered from bottom to top, where the bottom left of $C_1$ equals $a$ and the top right of $C_k$ equals $b$. 
This, in turn,  induce horizontal segments. 
Consider some value $\Delta \geq 0$.
We define the $\Delta$ free space as all points $(x, y)$ for which $d(\pi(x), \sigma(y)) \leq \Delta$. We denote this area as $\dfs$ and the horizontal segments induce intervals  $h_1, \ldots, h_{k+1}$ which is each segment intersected with $\dfs$ (these intervals implicitly grow as we increase $\Delta$). 

For each pair $(i, j)$ with $i < j$, $h_i$ can \emph{$\Delta$-reach} $h_j$ if there exists a monotone path from a point on $h_i$ to a point on $h_j$ inside $\dfs$.
Imagine $\Delta$ increasing from $0$ to $\infty$.
We consider four types of \emph{events}:
\begin{itemize}
    \item A \emph{spawn event} for a horizontal segment $h_i$ is the lowest $\Delta$ such that $h_i \neq \emptyset$.
    \item An \emph{undertake} event for an index pair $(i, j)$ is the lowest value $\Delta$ such that for all $\Delta' \geq \Delta$, the left endpoint of $h_i$ is farther left than the left endpoint of $h_j$. 
    \item A \emph{join} event for an index pair $(i, j)$ is the lowest value $\Delta$ for which the intervals $h_i$ and $h_j$ vertically overlap. 
\end{itemize}

\subparagraph{Storing a loser tree and a heap.}
Imagine increasing $\Delta$ from $0$ to $\infty$. 
We compute the minimum number of join events to realise a monotone path from $h_1$ to $h_{k}$ using a linked–list.
We also a tree that we call a \emph{loser tree} over $h_1$ to $h_k$. 
For each node of the loser tree stores a segment.
Specifically, if a node has children storing segments $h_i$ and $h_j$, we store $h_i$ if in $\dfs$ the left endpoint of $h_i$ is farther right to the left endpoint of $h_j$. We store in our heap the undertake event $(i, j)$ --- which may change the ordering between the left andpoints of $h_i$ and $h_j$. 

We  also store a linked list of buckets. 
The bottom bucket $(h_1, h_2, \ldots h_i)$  contains all segments from $h_1$ up to the maximum index $i$ such that there is a path from $h_1$ to $h_i$ in $\dfs$. 
The remaining items of each list store $h_j$ for $j > i$ consecutively. 
Denote by $\ell$ the rightmost left endpoint of all segments in $(h_1, h_2, \ldots h_i)$ in $\dfs$ and let $h_\ell$ be the corresponding segment; note that we can obtain $\ell$ in $O(\log k)$ time from the loser tree.  We store in our heap the spawn event for $h_{i+1}$ and the join event between $h_\ell$ and $h_{i+1}$.

\subparagraph{The argument.}
Suppose that for the current value of $\Delta$, the bottom bucket is $(h_1, h_2, \ldots h_i)$.
Then $\ell$ lies further right than the right endpoint of $h_{i+1}$. 
We pop the next event from the heap and set $\Delta$ to the corresponding value. 
We do a case analysis of which event is popped.

If the event is a spawn event of $h_{i+1}$, we test whether it becomes possible to go from $h_1$ to $h_{i+1}$ by testing if the point that is $h_{i+1}$ lies right of $\ell$. 
If so, we merge $h_{i+1}$ into the bottom bucket. 
We recompute $\ell$ in $O(\log k)$ time using the loser tree, remove the old join event, and add the join event between the new $h_\ell$ and $h_{i+2}$.

If the event is an undertake event in the loser tree, then we update the loser tree.
Note that these events change at most one node in the loser tree at a time, since at the time of an undertake event the left endpoints of the involved intervals coincide with one another.
Thus, we update the loser tree in $O(\log k)$ time and we find the new value $\ell$ and the corresponding segment $h_\ell$ in $O(\log k)$ time. 
Since for pair $(a, b)$, there can be only one undertake event and that invalidates all undertake events of $b$ (since after an undertake, $h_i$ grows faster than $h_j$), the loser tree generates at most $O(k \log k)$ undertake events and therefore this takes $O(k \log^2 k)$ total time.

If the event is the join event between $h_\ell$ and $h_{i+1}$ then, given that it was not possible to go from $h_1$ to $h_{i+1}$ before, it now becomes possible to go from $h_1$ to $h_{i+1}$. 
For the current updated $\Delta$, the single vertical line where $h_1$ and $h_{i+1}$ coincides corresponds to the monotonicity event between their corresponding points on $\sigma$ and the corresponding edge of $\pi$. We add the corresponding monotonicity vertex to $\pi$.
We then merge $h_{i+1}$ into the bottom bucket. 
We recompute $\ell$ in $O(\log k)$ time using the loser tree. We add to the heap the join event between the new $h_\ell$ and $h_{i+2}$ and the spawn event of $h_{i+1}$. 

As a result, we compute in $O(k \log^2 k)$ time the minimum number of monotonicity vertices required for there to be a monotone path in the \texttt{VE}-graph from $a$ to $b$. Note that the minimality comes from the fact that for any monotonicity event that we encounter this is the minimum value $\Delta$ that prevents further progression for a monotone curve. 
Since each path in the \texttt{VE}-graph has at most $O(n + m + M)$ vertices, applying this subroutine to all columns $C$ and rows $R$ where the path is non-monotone takes $O( (n + m + M) \log^2 (n + m + M))$ time:

\begin{lemma}
    Consider their \texttt{VE}-graph and let $P$ be a min-cost path from $(1, 1)$ to $(n, m)$. 
We can identify each row $R$ and column $C$ in which $P$ is non-monotone.
For each column $C$, where $P$ is not $x$-monotone, let $e$ be the corresponding edge of $\pi$. Let $P$, restricted to $C$, go from $a$ to $b$.

In $O( (n + m + M) \log^2 (n + m + M))$ total time we can, for all $(C, e, (a, b))$,  compute the minimum set of monotonicity vertices $V$ on $e$ that need to be introduced such that in the resulting \texttt{VE}-graph, the minimum cost path from $a$ to $b$ is monotone. 
\end{lemma}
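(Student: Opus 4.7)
The plan is to prove the lemma in three stages: correctness of the per-column subroutine, minimality of the monotonicity vertex set it produces, and the aggregate time bound. For a single column $C$ with clipped cells $C_1, \ldots, C_k$ and induced segments $h_1, \ldots, h_{k+1}$, I would first establish the central invariant used implicitly by the linked list: at any value of $\Delta$, the set of indices $i$ such that $h_1$ can $\Delta$-reach $h_i$ forms a prefix $\{1, \ldots, t(\Delta)\}$. This follows because any monotone curve inside $\dfs$ from $h_1$ to $h_j$ must cross every intermediate horizontal segment $h_i$, so one can truncate it. The invariant validates the ``bottom bucket'' structure and allows us to track only a growing frontier. The quantity $\ell$, the rightmost left endpoint among segments currently in the bucket, is exactly the obstacle to extending the prefix: $h_1$ first $\Delta$-reaches $h_{i+1}$ when either $h_{i+1}$ spawns to the right of $\ell$, or when the intervals $h_\ell$ and $h_{i+1}$ vertically overlap, which is precisely a join event.

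Next, I would argue that the loser tree correctly maintains $\ell$ as $\Delta$ grows. The left endpoint of each $h_i$, as a function of $\Delta$, is monotonically non-increasing and algebraic of constant degree; hence any pair of left endpoints can swap order at most once permanently, which is the undertake event stored in the heap for that pair. A loser-tree node is affected only by undertakes involving its two subtree winners, which is why an update is local and costs $O(\log k)$. A standard amortised argument bounds the number of undertake events by $O(k \log k)$, and each spawn and join event is produced at most once per segment. This makes the event count $O(k \log k)$ with $O(\log k)$ per heap operation, yielding the $O(k \log^2 k)$ per-column time. For minimality, every join event extracted by the algorithm corresponds to the unique monotonicity vertex on $e$ whose value equals the smallest $\Delta$ at which the pair $(h_\ell, h_{i+1})$ becomes monotonically reachable. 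Without that vertex, no monotone path in the refined \texttt{VE}-graph can extend through the transition at $h_{i+1}$ below that cost, regardless of which other points of $e$ are inserted.

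For the global bound, I would observe that each horizontal segment encountered across all non-$x$-monotone columns and non-$y$-monotone rows is charged against a vertex of $P$. Since $P$ is a path in the \texttt{VE}-graph of the current refinement, it has $O(n+m+M)$ vertices; hence $\sum_{C,R} k = O(n+m+M)$, and Jensen-style concavity of $k \mapsto k \log^2 k$ gives the stated $O((n+m+M)\log^2(n+m+M))$ total time. The hard part will be the minimality argument: one has to rule out that fewer vertices, placed at \emph{different} positions on $e$, could jointly substitute for several of the vertices extracted by the algorithm. I expect this to reduce to showing that the extracted monotonicity events are pairwise non-dominating in the bottleneck sense, so that omitting any one of them strictly increases the cost of the cheapest monotone $a$-to-$b$ path in the next iteration.
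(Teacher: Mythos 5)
Your proposal mirrors the paper's own argument: the same prefix-reachability invariant motivating the bucket linked list, the same loser tree maintaining $\ell$ with spawn/undertake/join events driving a kinetic-tournament-style $O(k\log^2 k)$ per-column bound, and the same charging of the segments $h_1,\dots,h_{k+1}$ to vertices of the path $P$ to obtain the global $O((n+m+M)\log^2(n+m+M))$ bound (though note that $k\mapsto k\log^2 k$ is convex, so the aggregate bound follows from the trivial estimate $\sum_i k_i\log^2 k_i \le (\sum_i k_i)\log^2(\sum_i k_i)$ rather than a Jensen argument). The minimality concern you flag at the end is real but is also left essentially unargued in the paper, which settles it with a single remark that each join event is the least $\Delta$ permitting further monotone progress; your explanation of why each extracted join event is unavoidable is at the same level of rigour as the published proof.
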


\shootup*
\begin{proof}
    Denote by $(\pi, \sigma)$ the two curves that originally had $n$ and $m$ vertices, and that have $M$ vertices added to them. 
    Each iteration, we can compute the min-cost path from the source to the sink in the \texttt{VE}-graph in $O(n\cdot m + M)$ time using a simple linear scan from left to right over this vertex-weighted \texttt{VE}-graph.
    The resulting distance $\ve(\pi, \sigma)$ is at most $\fd(\pi, \sigma)$ and it is exactly $\fd(\pi, \sigma)$ if there exists a path in the \texttt{VE}-graph realising $\ve(\pi, \sigma)$ that is $xy$-monotone. 
    A linear scan informs us that $\ve(\pi, \sigma) = \Delta$.
    Given $\Delta$, a depth-first search over this graph informs us whether there exists a monotone path realising this distance in the \texttt{VE}-graph.
    Thus, each iteration, we can test in $O(n\cdot m + M)$ time whether our algorithm may terminate. 
    If we do not terminate, we invoke the subroutine in $O((n+m+M) \log (n+m+M))$ time. 
    It remains to show that our algorithm terminates  within $O(n^3+m^3)$ iterations. 

    Consider any pair of cells that share either a column, or a row. there are $O(nm^2 + mn^2)$ such pairs. Any path through the graph can be represented as a sequence of such pairs, such that two consecutive pairs do not share a row or column---essentially the maximal sub-rows and sub-columns traversed by the particular path. If the algorithm computes a path that is non-monotone, then every such pair is scrutinized, and a sequence of break points is introduced, such that these pairs may no longer be traversed non-monotonuously. In particular, any later path that shares a pair of cells in its represetntation must be monotonuous between this pair of cells. As there are only $O(nm^2 + mn^2)$ such pairs, and in every iteration at least one such pair is no longer non-monotonuously traverable, after a total of $O(nm^2 + mn^2)$ iterations the algorithm terminates, concluding the proof.
\end{proof}

\section{Lossless simplification}
\label{sec:lossless}

In this section we discuss details of our simplification, and how we obtain tighter lower bounds.

\subsection{Hierarchical Simplification}

We first describe the particular simplification we are using. We maintain a vertex-restricted simplification of the input, where on demand, we may subdivide any edge of the simplification introducing a vertex of the input in-between the two simplification vertices. Along with it, we maintain a joint traversal of every edge of the simplification with the corresponding subcurve of the input curve. This joint traversal is not necessarily optimal, but instead, we focus on a good, and quickly computable joint traversal. 

To compute the joint traversal, let $p_i$ and $p_j$ be the vertices of $\pi$ defining some edge of $\overline{\pi}$. For every $p_k$ with $i\leq k\leq j$, we compute the value $t_k\in[0,1]$ minimizing $\|p_k-(p_i+t_k(p_j-p_i))\|$, that is the point along the edge of the simplification minimizing the distance to $p_k$. From these, we obtain a matching between the edge of the simplification with the corresponding subcurve matching $p_k$ with the point at $\max_{k'\leq k}t_{k'}$. It is easy to verify, that this is a $2$-approximation of the optimal traversal. 
This traversal can be computed time that is linear in the number of vertices of the corresponding subcurve by a simple linear scan. In particular, this scan immediately yields what we define as a \emph{split vertex}, i.e., a vertex realizing the maximum distance under this traversal.

\subsection{A tighter lower bound}

We next describe, how exactly we obtain our lower bound. For this, we endow every edge of the simplification with a negative weight corresponding to the negative maximum distance attained during the greedily computed joint traversal of the edge with its corresponding subcurve. This is the distance realised by the split vertex of the edge.  We compute the lower bound $LB(\overline{\pi},\overline{\sigma})$ by essentially computing the Fréchet distance respecting the edge weights. 

\begin{definition}[Fréchet distance with additive weights]
    Let $\pi$ and $\sigma$ be two polygonal curves, where every edge of $\pi$ and $\sigma$ is endowed with a weight. Let $w_\pi:[0,1]\rightarrow\mathbb{R}$ be the function mapping any $t$ to $0$, if it coincides with a vertex of $\pi$, and the weight of the edge otherwise. Let $w_\sigma$ be defined similarly. Then the Fréchet distance of $\pi$ and $\sigma$ with additive edge weights is defined as
    \[\fd^w(\pi,\sigma):= \min\limits_{\text{traversals } (\alpha,\beta)}\;
           \max\limits_{t \in [0,1]}\left( d\bigl( \pi(\alpha(t)),\, \sigma(\beta(t)) \bigr) + w_\pi(\alpha(t)) + w_\sigma(\beta(t))\right)\, .\]
\end{definition}

We first prove that this is indeed results in a lower bound to the Fréchet distance.

\begin{lemma}\label{lem:weightedLowerbound}
    Let $\sigma$ and $\pi$ be two curves, and let $\overline{\sigma}$ and $\overline{\pi}$ be any vertex-restricted simplification of $\sigma$ and $\pi$. Let $w_\sigma$ and $w_\pi$ be non-positive edge weights, such that the edge weight of any edge is at most the negative Fréchet distance of the edge and its corresponding subcurve. Then
    \[\fd^w(\overline{\pi},\overline{\sigma})\leq \fd(\pi,\sigma).\]
\end{lemma}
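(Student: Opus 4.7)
The plan is to exhibit a traversal of $(\overline{\pi},\overline{\sigma})$ whose weighted cost is at most $\fd(\pi,\sigma)$, by composing three traversals and invoking the triangle inequality pointwise. First, for every edge $e$ of $\overline{\pi}$ with corresponding subcurve $\pi_e$ of $\pi$, the hypothesis $w_e \le -\fd(e,\pi_e)$ guarantees a traversal $\tau_e$ of $(e,\pi_e)$ in which the leash length never exceeds $-w_e$. Concatenating these edge-traversals over all edges of $\overline{\pi}$ (they agree on shared endpoints, since the vertices of $\overline{\pi}$ are vertices of $\pi$) produces a joint traversal $\tau_\pi$ of $(\overline{\pi},\pi)$ satisfying
\[
d\bigl(\overline{\pi}(t),\,\pi(\phi_\pi(t))\bigr) \;\le\; -w_\pi(t)
\]
at every parameter $t$ of $\overline{\pi}$, where $\phi_\pi$ is the correspondence induced by $\tau_\pi$. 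This bound also holds at vertices of $\overline{\pi}$, where both sides are zero. I would construct $\tau_\sigma$ analogously for $(\overline{\sigma},\sigma)$.

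Next, take an optimal traversal $(\alpha^\ast,\beta^\ast)$ of $(\pi,\sigma)$ and lift it through $\tau_\pi$ and $\tau_\sigma$ to obtain a traversal $(\tilde\alpha,\tilde\beta)$ of $(\overline{\pi},\overline{\sigma})$: at each parameter $s$, let $\tilde\alpha(s)$ be a point of $\overline{\pi}$ that $\tau_\pi$ pairs with $\pi(\alpha^\ast(s))$, and define $\tilde\beta(s)$ symmetrically. The pointwise bounds from the previous paragraph together with the triangle inequality yield
\[
d\bigl(\overline{\pi}(\tilde\alpha(s)),\overline{\sigma}(\tilde\beta(s))\bigr) + w_\pi(\tilde\alpha(s)) + w_\sigma(\tilde\beta(s))
\;\le\; d\bigl(\pi(\alpha^\ast(s)),\sigma(\beta^\ast(s))\bigr)
\;\le\; \fd(\pi,\sigma).
\]
Maximising over $s$ and then taking the infimum over all admissible traversals of $(\overline{\pi},\overline{\sigma})$ gives the claim.

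The main obstacle I foresee is formal: $\tau_\pi$ may stall on $\overline{\pi}$ while $\pi$ moves through a subcurve, so the pseudo-inverse implicit in the lift $\alpha^\ast(s)\mapsto\tilde\alpha(s)$ is multi-valued. This is handled by standard reparameterisation: view each traversal as a weakly increasing pair of continuous maps from a common interval, take a simultaneous refinement of the three time-axes (e.g., by choosing right-continuous selections of the pseudo-inverses), and verify that the resulting pair $(\tilde\alpha,\tilde\beta)$ is continuous and monotone. Once this bookkeeping is in place, the pointwise chain above is unaffected, and the bound transfers verbatim to the weighted Fréchet cost of the constructed traversal.
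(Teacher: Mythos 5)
Your approach follows the paper's own: build per-edge joint traversals of $(\overline{\pi},\pi)$ and $(\overline{\sigma},\sigma)$, lift the optimal traversal of $(\pi,\sigma)$ through them to a traversal of $(\overline{\pi},\overline{\sigma})$, and apply the triangle inequality pointwise. The core chain of inequalities is right, and the multi-valuedness/reparameterisation issue you flag at the end is the correct one to worry about for monotonicity and continuity of the lift.

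There is, however, a genuine gap in the assertion that ``this bound also holds at vertices of $\overline{\pi}$, where both sides are zero.'' If each $\tau_e$ is an \emph{arbitrary} minimum-cost traversal of $(e,\pi_e)$, the concatenated $\tau_\pi$ may \emph{stall} on a vertex $p_j$ of $\overline{\pi}$ while $\pi$ traverses a nontrivial subarc near $p_j$. For any $s$ with $\pi(\alpha^\ast(s))$ inside that subarc but not equal to $p_j$, the lift is then forced to $\tilde\alpha(s)=p_j$, so $d\bigl(\overline{\pi}(\tilde\alpha(s)),\pi(\alpha^\ast(s))\bigr)>0$ while $w_\pi(\tilde\alpha(s))=0$; the per-coordinate estimate $d\bigl(\overline{\pi}(\tilde\alpha(s)),\pi(\alpha^\ast(s))\bigr)+w_\pi(\tilde\alpha(s))\le 0$ fails there, and the chain that yields $\fd^w(\overline{\pi},\overline{\sigma})\leq\fd(\pi,\sigma)$ breaks down at exactly those times. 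The paper avoids this by restricting the per-edge traversals so that each vertex of $\overline{\pi}$ is matched to \emph{only} the corresponding single point of $\pi$; such constrained traversals need not achieve $\fd(e,\pi_e)$ exactly, but approach it in the limit, so the argument delivers $\fd^w(\overline{\pi},\overline{\sigma})\leq\fd(\pi,\sigma)+\varepsilon$ for every $\varepsilon>0$, which suffices since $\fd^w$ is an infimum over traversals. You should impose the same restriction on your $\tau_e$ (or explicitly perturb an optimal $\tau_e$ to remove stalling at vertices at an additive cost of at most $\varepsilon$) and then let $\varepsilon\to 0$.
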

\begin{proof}
    Consider first a traversal of every edge of $\overline{\pi}$ (resp. $\overline{\sigma}$) with its corresponding subcurve that matches the vertices  of the edge to only the start and end point of the corresponding subcurve. The attained Fréchet distance w.r.t. these joint traversals coincides with the Fréchet distance in the limit. Let $(\alpha_\pi,\beta_\pi)$ (resp. $(\alpha_\sigma,\beta_\sigma)$) be this limit.
    
    Consider a joint traversal $(\alpha,\beta)$ realizing $\fd(\pi,\sigma)$. This traversal induces a joint traversal $(\overline{\alpha},\overline{\beta})$ of $\overline{\pi}$ and $\overline{\sigma}$, where by triangle inequality we have that for all $t\in[0,1]$ it holds that
    \begin{align*}
        \|\pi(\alpha(t)) - \sigma(\beta(t))\|&\geq \|\overline{\pi}(\overline{\alpha}(t)) - \overline{\sigma}(\overline{\beta}(t))\| &- \max_{s:\pi(s)\text{ matched with }\overline{\pi}(\overline{\alpha}(t))\text{ by }(\alpha_\pi,\beta_\pi)}\|\pi(s) - \overline{\pi}(\overline{\alpha}(t))\|\\
        &  &- \max_{s:\sigma(s)\text{ matched with }\overline{\sigma}(\overline{\beta}(t))\text{ by }(\alpha_\sigma,\beta_\sigma)}\|\sigma(s) - \overline{\sigma}(\overline{\beta}(t))\|.\\
    \end{align*}
    But then in particular $\fd(\pi,\sigma)\geq \fd^w(\overline{\pi},\overline{\sigma})$, as the edge weights are always less than these maxima.
\end{proof}

Observe that a traversal realizing the Fréchet distance with edge weights is also a traversal of $\overline{\sigma}$ and $\overline{\pi}$, and hence, together with joint traversals of $\pi$ and $\overline{\pi}$, and $\sigma$ and $\overline{\sigma}$ yield an upper bound to the Fréchet distance of $\pi$ and $\sigma$. Finally, we need to show that the weighted Fréchet distance may be computed with the classical algorithm.

\begin{lemma}
    The Fréchet distance with additive weights can be computed by adding, for each edge $e$ of the simplified curves, negative vertex weights to all eddies on that edges that correspond to the negative weight of the edge $e$. 
\end{lemma}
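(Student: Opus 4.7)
My plan is to mirror the standard VE-graph correctness argument, exploiting the fact that $w_\pi$ and $w_\sigma$ are piecewise constant on cell interiors and zero at vertex coordinates. First, I would reduce $\fd^w$ to the decision question: $\fd^w(\pi,\sigma)\le\Delta$ iff the weighted free space $F^w_\Delta := \{(a,b)\in\Gamma_{\pi,\sigma} : d(\pi(a),\sigma(b)) + w_\pi(a) + w_\sigma(b) \le \Delta\}$ admits a monotone path from $(1,1)$ to $(n,m)$, by the same DFS-over-the-free-space argument used for the unweighted variant. Within the cell corresponding to edges $e_i$ and $f_j$, the weighted free space is a piecewise-shifted sublevel set of $d$: in the open interior the threshold is $\Delta - w_{e_i} - w_{f_j}$; on a grid edge with exactly one integer coordinate only one of the two weights applies; and at the four corners the threshold is simply $\Delta$.

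Second, I would establish the weighted analogue of the ``eddy is the gateway'' property. Restricted to any fixed grid edge the weight contribution is a single additive constant, so it does not move the argmin of $d$ along that grid edge, and the minimising point remains the classical eddy. Hence a grid edge lying in the interior of a curve edge $e$ contains a point of $F^w_\Delta$ iff $d(\mathrm{eddy}) + w_e \le \Delta$ --- exactly the adjusted vertex-weight stipulated by the lemma. Corner vertices lie at vertex coordinates of both curves and receive no weight adjustment, consistent with $w_\pi = w_\sigma = 0$ at integer coordinates.

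With these observations the rest of the VE-graph machinery carries over unchanged. A VE-respecting path of bottleneck weight $\Delta$ in the reweighted graph certifies $\ve^w(\pi,\sigma) \le \Delta$, its interpolated counterpart gives $\fd^w(\pi,\sigma) \le \ive^w(\pi,\sigma)$, and the refinement procedure from Section~3 inserts the requisite monotonicity events. A weighted monotonicity event is the point on an edge $e$ minimising the maximum weighted distance to two fixed vertex coordinates; since $w_e$ is an additive constant along $e$, its minimiser coincides with the unweighted monotonicity event, so the refinement analysis (and the data-structural argument in Appendix~\ref{app:monotonicity}) goes through verbatim. Thus, after sufficient refinement, $\ve^w = \fd^w$ in the reweighted graph.

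The main obstacle is verifying that the discontinuity of $w_\pi$ and $w_\sigma$ at vertex coordinates (where they drop from the non-positive $w_e$ up to $0$) does not break the correspondence. Because edge weights are non-positive, the weighted distance strictly increases at every vertex transition of a traversal, so the $\max$ over $t$ picks up the unadjusted distance at vertex coordinates; this agrees exactly with the fact that corner vertices in the VE-graph receive no weight adjustment. A short supremum argument on a traversal, restricted to each cell boundary, then confirms that the weighted max along any VE-respecting path equals its bottleneck vertex weight in the reweighted graph, closing the loop between the decision formulation, the VE-graph realisation, and the refinement-based convergence to $\fd^w$.
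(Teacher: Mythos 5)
Your proposal is correct and rests on the same central observation as the paper's proof: along any fixed row or column of the parameter space the additive weight is a single constant (since weights vanish at vertex coordinates), so the minimisers defining eddies and monotonicity events are unmoved, and the standard VE-graph machinery applies verbatim after reweighting the eddies. Your write-up is considerably more explicit than the paper's---spelling out the shifted sublevel sets, the behaviour on grid edges versus corners, and the upper-semicontinuity at vertex coordinates---but these are elaborations of the same argument rather than a different route.
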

\begin{proof}
    For this, it suffices to show, that monotonicity events for $(\pi, \sigma)$ are in fact also monotonicity events for these additive edge weights. This is, however, immediately apparent, as the weights are not applied to the vertices of $\overline{\sigma}$ and $\overline{\pi}$, and hence the monotonicity events correspond exactly to the minima of the distance function, as the only weight affecting a column (resp. row) is just a single edge weight, and hence constant. Thus, we can compute the Fréchet distance with additive weights by adding the weights to the eddies, and leaving the rest of the Fréchet distance computing algorithm unchanged.
\end{proof}

\section{Exact arithmetic}
\label{sec:exact}
    In this section we give more details on the exact arithmetic we perform. Ignoring the edge weights for now, every step of the algorithm can classically be performed by comparing square roots of fractions of integers, and fractions of integers, if the input coordinates are integers (or fractions of integers). These comparisons can be done exactly in $O(1)$ time, when provided the fractions inside the roots.

    When introducing edge weights, we need to compare sums of square roots. The number of operations necessary to compare these expressions grows exponentially in the number of square roots in the expressions. To stay within practical realms, we do not do these types of comparisons. By Lemma \ref{lem:weightedLowerbound}, any overestimate of the edge weights also serves as a lower bound. Hence, for these lower bounds we generate an integer fraction that is larger than distance attained.

    \paragraph*{How to resolve square roots.}
    It is common knowledge how to resolve constantly many square roots in an equation in an exact manner.
    The point we wish to make in this appendix section, is that whilst it is publicly known, it is also quite tedious and error-prone.
    Via our reduction of Lemma \ref{lem:weightedLowerbound}, which allows us to use integer over-estimations whenever an expression contains more than two square roots, we only are required to compare expressions of the form:  \[\frac{a}{b}\pm\sqrt{\frac{c}{d}}\leq\frac{e}{f}\pm\sqrt\frac{g}{h}.\]
    These inequalities are  resolved with exact integer arithmetic in the following way:
    
    First, we note that we can relatively easily verify an expression of the form: $\frac{a}{b} \leq \pm\sqrt{\frac{c}{d}}$.
    Indeed, we can first check whether the individual fraction $\frac{a}{b}$ and $\frac{c}{d}$ are positive or negative. 
    This results in a 4-case distinction, where in each case, we can square both sides to evaluate this expression exactly (adding a $-1$ whenever needed). 

    Once expressions of that form can be resolved, we can compare whether:
\[\sqrt{\frac{c}{d}}\leq\frac{e}{f}\pm \sqrt\frac{g}{h}.\]

    Again, we can do case distinction based on whether $\frac{c}{d}$, $\frac{e}{f}$, $\frac{g}{h}$ are positive and negative. For each of the resulting eight cases, there is a way to square both sides such that the inequalities are preserved. 
    The result is an expression with one fewer fraction, on which we recursively apply this strategy. Thus, via a sizeable but constant-depth decision tree, we can evaluate expressions of the form:  \[\frac{a}{b}\pm\sqrt{\frac{c}{d}}\leq\frac{e}{f}\pm\sqrt\frac{g}{h},\]
    in an exact manner using integer arithmetic. 
    We note that for certain input instances, such as the adversarial input derived from Orthogonal Vector instances, evaluating this decision tree is practically costly. Our implementation restricts the input to 32-bit integers (where we use 64-bit integers to resolve these square root equations). 
    We note that it is possible to template the data type to allow for arbitrarily many input bits, but our implementation contains no such templates.

\end{document}